\newtheorem{lem}{Lemma}[section]
\newtheorem{thm}{Theorem}[section]
\newtheorem{prop}{Proposition}[section]
\newtheorem{cor}{Corollary}[section]
\newtheorem{defx}{Definition}[section]
\newcommand{\w}{\omega}
\newcommand{\J}{\mathcal{J}}
\newcommand{\C}{\mathbb{C}}
\newcommand{\wre}{\omega_{\text{red}}}
\newcommand{\nr}{n_R}
\newcommand{\oo}{\mathcal{O}}
\newcommand{\tth}{\theta}
\newcommand{\ber}{\text{Ber }}
\newcommand{\tred}{\text{red}}
\newcommand{\Dbar}{\bar{D}}
\newcommand{\msp}{\mathfrak{M}_{g;\nr}}
\newcommand{\stsh}{\mathcal{O}}
\newcommand{\F}{\mathcal{F}}
\newcommand{\Ll}{\mathcal{L}}
\newcommand{\Kk}{\mathcal{K}}
\newcommand{\ee}{\mathcal{E}}
\newcommand{\bz}{\bar{z}}
\newcommand{\spec}{\text{Spec }}
\newcommand{\bpk}{[\partial_{z_k} \, | \, \partial_{\tth_k}]}
\title[On the Super Mumford Form in the Presence of Punctures]{On the Super Mumford Form in the presence of Ramond and Neveu-Schwarz Punctures}
\author[Daniel J. Diroff]{Daniel J. Diroff \\
University of Minnesota \\
Email: dirof003@umn.edu}
\begin{document}

\begin{abstract} We generalize the result of \cite{vormum} to give an expression for the super Mumford form $\mu$ on the moduli spaces of super Riemann surfaces with Ramond and Neveu-Schwarz punctures. In the Ramond case we take the number of punctures to be large compared to the genus. We consider for the case of Neveu-Schwarz punctures the super Mumford form over the component of the moduli space corresponding to an odd spin structure. The super Mumford form $\mu$ can be used to create a measure whose integral computes scattering amplitudes of superstring theory. We express $\mu$ in terms of local bases of $H^0(X, \w^j)$ for $\w$ the Berezinian line bundle of a family of super Riemann surfaces.
\end{abstract}

\maketitle

\section*{Introduction}
Due to relatively recent computations done by E. D'Hoker and D. H. Phong \cite{DPhong} and new ideas pushed forward by E. Witten \cite{wit4}, the role of supergeometry in superstring perturbation theory has been revived from what it once was in the 1980s. However, the task of computing superstring scattering amplitudes have proved difficult due to many complications boiling down to the fact that the underlying supergeometry was not completely understood. 

Scattering amplitudes in superstring theory are expressed as Berezin integrals over various moduli spaces of super Riemann surfaces. One might hope that such integrals would be computable via expressing supermoduli space as a fiber bundle over a bosonic reduced space, allowing one to integrate in the odd directions fiberwise. In fact, this is exactly the technique utilized in the D'Hoker and Phong results. However, this assumption was only valid for low genus, as it was shown in a recent paper by R. Donagi and E. Witten \cite{donwit} that in general supermoduli space \emph{is not} a fiber bundle over its reduced space. This notion is significant in supergeometry and is known as \emph{splitness}.

Essentially, one says a supermanifold is split if it can be expressed as such a fiber bundle over a bosonic base. It is known that every $C^{\infty}$ supermanifold is indeed split \cite{man1}. Thus in principal the theory of smooth supermanifolds is contained in the theory of exterior algebra vector bundles over a smooth manifold. However, holomorphic methods have proved to be very useful in studying super Riemann surfaces and their moduli as holomorphic or complex supermanifolds need not be split. Thus holomorphic supergeometry is central in understanding computations of superstring scattering amplitudes.

In bosonic string theory, the $g$ loop contribution to the partition function can be written as the integral
$$ Z_g = \int_{\mathcal{M}_g} d\pi_g, $$
where $\mathcal{M}_g$ is the usual moduli stack of Riemann surfaces of genus $g$ and $d\pi_g$ is the so-called Polyakov measure. Suppose we have a universal family $\mathcal{C}_g$ over $\mathcal{M}_g$ and let $\pi: \mathcal{C}_g \to \mathcal{M}_g$ denote the projection. 

In a famous theorem due to Belavin and Knizhnik, the Polyakov measure was shown to be the modulus squared of a trivializing section of a holomorphic line bundle on $\mathcal{M}_g$,
$$ d\pi_g = \mu_g \wedge \overline{\mu}_g. $$
The form $\mu_g$ is called a Mumford form and it is a section exhibiting the Mumford isomorphism
$$ \left( \text{det }  \pi_*\Omega \right)^{13} \otimes \left( \text{det }  R^1\pi_*\Omega \right)^{-13} \cong \det \pi_*\Omega^2 \otimes \left( \det R^1\pi_*\Omega^2 \right)^{-1},$$
where $\Omega$ is the sheaf of relative differentials on $\mathcal{C}_g$. Here and henceforth, powers of vector bundles, sheaves and vector spaces stand for tensor powers.

In the super case, the object one integrates over in computations of superstring scattering amplitudes is slightly more complicated than simply $\mathfrak{M}_g$, see \cite{wit1}. Nevertheless there still is a relevant canonical super Mumford isomorphism,
$$ (\ber \pi_*\w )^5 \otimes (\ber R^1\pi_*\w )^{-5} \cong \ber \pi_*\w^3 \otimes \left( \ber R^1\pi_*\w^3 \right )^{-1}  $$
for $\w$ the relative Berezinian sheaf of a family of super Riemann surfaces of genus $g$. The trivializing section that exhibits the above isomorphism is called the super Mumford form. Such a form is useful in the super case in very much the same way as that of the bosonic Mumford form, as sections of $\ber \pi_*\w^3$ are super volume forms on $\mathfrak{M}_g$. In a paper by A. Voronov \cite{vormum}, an explicit formula of the super Mumford form was computed over the odd-spin component of $\mathfrak{M}_g$. 

In this paper we expand on those ideas and produce explicit formulas for the analogous super Mumford forms over the moduli spaces $\mathfrak{M}_{g;n_R}$ and $\mathfrak{M}_{g;n_{NS}}$ of genus $g \geq 2$ super Riemann surfaces with Ramond or Neveu-Schwarz punctures. In both cases we work under some assumptions regarding the local freeness of the sheaves $R^i\pi_*\w^j$. The specifics are given at the end of Section \ref{sec1}. In the Ramond case we furthermore impose the condition that the number of Ramond punctures $\nr$ be strictly greater than $6g-6$. 

We then discuss how these formulae give rise to a physically relevant measure. By explicit formulas, we mean those written in terms of chosen sections of natural sheaves defined on the moduli spaces.

The paper is divided into 3 sections. The main results (Theorems \ref{mainthm1} and Corollary \ref{maincor}) are found in Sections 2 and 3 where the explicit formulas of the relevant super Mumford forms are presented. Section 1 is devoted to setting up notation and briefly reviewing the basic notions needed for the rest of the paper.  Appendices appear after Section 3 containing a few technical lemmas used in the main arguments.

\section{Super Riemann Surfaces and Other Preliminaries} \label{sec1}

\subsection{Definitions and Basic Notions}\mbox{}

We briefly review some basic definitions and notions to setup notation. \emph{Super Riemann surfaces} are a certain class of complex supermanifolds of dimension $1 | 1$, which carry an additional piece of structure. These play the role of superstring worldsheets and their theory very closely parallels that of classical Riemann surfaces.

We are interested in the moduli of these objects and thus have the following notion of a family.

\begin{defx}
A family of super Riemann surfaces is a family of complex supermanifolds $\pi: X \to S$ of relative dimension $1 | 1$ equipped with a maximally non-integrable distribution $\mathcal{D}$ of rank $0 | 1$, i.e. an odd subbundle of the relative tangent bundle $\mathcal{T}_{X/S}$ such that the Lie bracket induces the isomorphism
$$ [\cdot,\cdot] : \mathcal{D}^2 \xrightarrow{\sim} \mathcal{T}_{X/S}/\mathcal{D}.
$$
\end{defx}
Locally one can always find relative coordinates $x | \theta$ such that the distribution $\mathcal{D}$ is generated by the odd vector field $D_{\theta} = \partial_{\theta} + \theta \partial_x$, such coordinates are called \emph{superconformal}. We say a change of coordinates $y | \zeta$ is \emph{superconformal} if $\mathcal{D}_{\zeta}$ and $\mathcal{D}_{\theta}$ are $\stsh_X$-multiples of each other. Throughout this paper we will sometimes refer to a family of super Riemann surfaces as a \emph{family of SUSY curves} or simply by a \emph{SUSY family}.

It is well known \cite{modSRS} that if the base $S$ is reduced, we essentially get a classical object, namely a family of spin curves.

\begin{prop} \label{jspin}
Let $\pi: X \to S$ be a family of super Riemann surfaces over a reduced base $S$. Let $\J \subset \stsh_X$ denote the sheaf of ideals generated by all odd elements. Then
\begin{enumerate}
    \item $\mathcal{J}$ is a locally free $\stsh_{\tred}$ module of rank $0 | 1$,
    \item $\J^* = \mathcal{H}om_{\stsh_{\tred}}(\J, \stsh_{\tred}) \cong \mathcal{D}_{\tred}$
    \item $\Pi \mathcal{J}$ becomes a relative spin structure on the family $X_{\tred} \to S$, i.e.
$$ (\Pi \mathcal{J}^{\otimes 2}) = \J^{\otimes 2} \cong \Omega_{X_{\tred}/S}^1, $$
\end{enumerate}
where $\Pi$ is the parity reversing functor.
\end{prop}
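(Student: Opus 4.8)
The plan is to verify all three parts in local superconformal coordinates and then check that the resulting isomorphisms are canonical, so that they glue. First I would fix superconformal coordinates $x\,|\,\tth$ in which $\dis$ is generated by $D_{\tth}=\partial_{\tth}+\tth\partial_x$. Because $S$ is reduced and the relative odd dimension is $1$, the only nilpotent generator of $\stsh_X$ is $\tth$, so locally $\stsh_X=\stshr\oplus\tth\,\stshr$ as $\stshr$-modules, where $\stshr=\stsh_X/\J$ is the structure sheaf of $\Xre$. The odd elements are exactly those in $\tth\,\stshr$, hence $\J=\tth\,\stshr$ and $\J^2=0$ since $\tth^2=0$. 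The vanishing $\J^2=0$ is what promotes $\J$ from an $\stsh_X$-ideal to a module over $\stsh_X/\J=\stshr$, and as such it is free of rank $0|1$ on the generator $\tth$. This gives part (1), once one checks that the local generator transforms by an $\stshr$-unit under superconformal changes of coordinates.

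For part (2) I would produce a canonical pairing $\dis_{\tred}\times\J\to\stshr$ by $(V,s)\mapsto\overline{V(s)}$, where $V$ acts as a derivation on $\stsh_X$ and the bar denotes reduction modulo $\J$. Linearity over $\stshr$ in the second slot follows from the Leibniz rule together with the fact that $V(a)s\in\J$ for $s\in\J$, so that term dies after reduction; the pairing descends to $\dis_{\tred}=\dis/\J\dis$ because $V\in\J\dis$ forces $V(s)\in\J$. In the local frame one computes $\overline{D_{\tth}(\tth)}=1$, so the pairing matrix is a unit and the pairing is perfect, yielding $\dis_{\tred}\cong\J^{*}$. The parities match, since an odd vector field applied to an odd function is even.

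Part (3) I would deduce by combining (2) with the defining bracket isomorphism. Reducing $[\cdot,\cdot]\colon\dis^{2}\xrightarrow{\sim}\mathcal{T}_{X/S}/\dis$ modulo $\J$ gives $\dis_{\tred}^{\otimes 2}\cong(\mathcal{T}_{X/S}/\dis)_{\tred}$, and the local identity $D_{\tth}^{2}=\partial_x$ identifies the target: since $\partial_{\tth}\equiv-\tth\partial_x$ modulo $\dis$, the quotient $\mathcal{T}_{X/S}/\dis$ is freely generated by the even field $\partial_x$, whose reduction spans $\mathcal{T}_{\Xre/S}$, giving a canonical $(\mathcal{T}_{X/S}/\dis)_{\tred}\cong\mathcal{T}_{\Xre/S}$. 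Dualizing the composite $\dis_{\tred}^{\otimes 2}\cong\mathcal{T}_{\Xre/S}$ and using $\dis_{\tred}\cong\J^{*}$ from (2), I get $\J^{\otimes 2}\cong(\dis_{\tred}^{\otimes 2})^{*}\cong\mathcal{T}_{\Xre/S}^{*}=\Omega^1_{\Xre/S}$, which is the claimed spin structure after applying $\Pi$.

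The routine local verifications above are easy; the real care lies in showing the isomorphisms are canonical and glue. I expect the main obstacle to be the identification $(\mathcal{T}_{X/S}/\dis)_{\tred}\cong\mathcal{T}_{\Xre/S}$: one must check that the even vector field representing a class in $\mathcal{T}_{X/S}/\dis$ genuinely preserves $\J$, so that it induces a derivation of $\stshr$, and that this assignment is independent of the superconformal frame. Concretely this amounts to verifying that superconformal transition functions act on the reduced data exactly as ordinary coordinate changes do on $\Xre$, which is precisely what makes $\Pi\J$ a bona fide relative spin structure rather than merely a square root of $\Omega^1_{\Xre/S}$ in each chart.
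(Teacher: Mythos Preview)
The paper does not actually prove this proposition; it states it as a well-known fact and cites \cite{modSRS}. So there is no ``paper's own proof'' to compare against.

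That said, your proposal is a correct and complete outline. The three steps---identifying $\J$ locally as $\tth\,\stshr$ using reducedness of $S$, building the canonical perfect pairing $\dis_{\tred}\times\J\to\stshr$ via $(V,s)\mapsto\overline{V(s)}$, and then reducing and dualizing the bracket isomorphism $\dis^{\otimes 2}\cong\mathcal{T}_{X/S}/\dis$---are exactly the standard moves, and your local checks are right. Your caution about the identification $(\mathcal{T}_{X/S}/\dis)_{\tred}\cong\mathcal{T}_{\Xre/S}$ is well placed: the point is that odd vector fields need not preserve $\J$, but every class in $\mathcal{T}_{X/S}/\dis$ has an even representative (locally a multiple of $\partial_x$), and after reduction modulo $\J$ only the even part survives anyway, so one obtains a well-defined derivation of $\stshr$. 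Once you write that sentence, the argument is complete.
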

In other words, we have that the reduction of supermoduli space $\mathfrak{M}_g$ of super Riemann surfaces of genus $g$ is the moduli space $\mathcal{SM}_g$ of genus $g$ Riemann surfaces equipped with a spin structure.

We will also use the notion of a \emph{family of supercurves}, by which we mean simply a family $\pi: X \to S$ of complex supermanifolds of relative dimension $1|1$. Then a family of SUSY curves is a family of supercurves with the extra data of the odd distribution $\mathcal{D}$.

If $\mathcal{F}$ denotes a rank $m \,|\, n$ locally free sheaf on $X$ with local generators \newline
$e_1, \dots, e_m \, |\, \zeta_1, \dots, \zeta_n$ we denote by
$$
[e_1, \dots, e_m \, |\, \zeta_1, \dots, \zeta_n]
$$
a local generator of the invertible sheaf $\ber \mathcal{F}$.

Given a family of supercurves $\pi:X \to S$, if a dualizing sheaf exists we will denote it by $\w$. In the case $\pi:X \to S$ is a family of SUSY curves it was shown in \cite{elmsg} that a dualizing sheaf exists and is the Berezinian of the relative cotangent sheaf $\w = \ber X/S := \ber \Omega^1_{X/S}$. In fact, from the short exact sequence 
$$
0 \longrightarrow \mathcal{D} \longrightarrow \mathcal{T}_{X/S} \longrightarrow \mathcal{D}^2 \longrightarrow 0
$$
one concludes $\w \cong \mathcal{D}^{-1}$.

On any complex supermanifold $X$ one can construct the sheaf of \emph{smooth superfunctions}, denoted by $\ee$. Loosely, $\mathcal{E}$ can be defined by the condition that it is locally generated by functions $x_k, \bar{x}_k\, |\, \theta_k, \bar{\theta}_k$ if $x_k \, | \, \theta_k$ are local coordinates for X. This has been made precise in a short paper of Haske and Wells \cite{smoothfromcomplex}. We will think of $z_k, \bar{z}_k, \zeta_k, \bar{\zeta}_k$ as generators of $\ee$ analogous as to what is common in complex analysis, so that locally every smooth superfunction $f$ is of the form (abbreviating the indices and using the usual multi-index notation)
$$
f(z, \bar{z} | \zeta, \bar{\zeta}) =\sum_{I,J} f_{IJ}(z, \bz)\zeta_I \bar{\zeta}_J
$$
for ordinary ($\C$-valued) smooth functions $f_{IJ}$. The sheaf $\ee$ then naturally has complex conjugation. 

By a smooth section of a complex super vector bundle $\mathcal{F}$, we mean a section of the sheaf $\mathcal{F}_{\ee} := \mathcal{F} \otimes_{\stsh} \mathcal{E}$. Furthermore we denote by $\overline{\mathcal{F}}$, the complex conjugate vector bundle of $\mathcal{F}$. Of particular interest is the \emph{smooth} Berezinian sheaf $\omega \otimes \overline{\w} \otimes \ee =: |\omega|^2$, as its sections yield natural objects that can be integrated over the \emph{entire} complex supermanifold $X$.

\bigskip

\subsection{$\bar{D}$-Cohomology}~ \label{dbarcoho}

Here we follow very closely \cite{RSV} and \cite{gid2}. Suppose we have a family of supercurves $\pi:X \to S$. For each $p,q \in \mathbb{Z}$ one can consider the sheaves
$$ 
\omega^{p,q} = \omega^{\otimes p} \otimes \overline{\omega}^{\otimes q} \otimes \ee. 
$$
Then $\omega_{\ee} = \omega^{1,0}, |\w|^2 = \w^{1,1}$ and we have well-defined operators $D$ and $\Dbar$
$$ 
D: \ee \to \omega^{1,0}, \hspace{.15in} \Dbar: \ee \to \omega^{0,1}
$$
given in local superconformal coordinates $x \, |\, \theta$ by
$$ \Dbar(f) = \Dbar_{\bar{\tth}} f(x, \bar{x}|\tth, \bar{\tth}) \, [d\bar{x} \, | \, d\bar{\tth} ]= \left( \frac{\partial f}{\partial \bar{\tth}} + \bar{\tth}\frac{\partial f}{\partial \bar{x}}  \right) \, [d\bar{x} \, | \, d\bar{\tth} ]. $$
and similarly for $D$. It is easy to see that this definition does not depend on coordinates by assumption on how $D$ and $\bar{D}$ transform. The nice observation is in in the following proposition that was discussed and discovered in \cite{RSV} and \cite{gid2}.

\begin{prop} \label{dbarSES}
For any locally free $\stsh_X$-module $\mathcal{F}$, after extending $\Dbar$ we have the exact sequence
$$
 0 \longrightarrow \mathcal{F} \longrightarrow \mathcal{F} \otimes \ee \overset{\Dbar}{\longrightarrow} \mathcal{F} \otimes \omega^{0,1} \longrightarrow 0.
$$
\end{prop}

This gives us an interpretation of the cohomology groups of a super vector bundle $\mathcal{F}$, which will prove useful for the computations done below.

\begin{cor}
For any super vector bundle $\mathcal{F}$ on $X$, the cohomology of the sheaf $\mathcal{F}$ is computed via the exact sequence given in the prior proposition.
\end{cor}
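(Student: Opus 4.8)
The plan is to recognize the short exact sequence of Proposition \ref{dbarSES} as a two-term acyclic resolution of $\mathcal{F}$ and then invoke the abstract de Rham theorem. Writing $\mathcal{A}^0 = \mathcal{F} \otimes \ee$ and $\mathcal{A}^1 = \mathcal{F} \otimes \w^{0,1}$, the exactness statement
$$ 0 \longrightarrow \mathcal{F} \longrightarrow \mathcal{A}^0 \overset{\Dbar}{\longrightarrow} \mathcal{A}^1 \longrightarrow 0 $$
says precisely that the complex $\mathcal{A}^\bullet = [\mathcal{A}^0 \xrightarrow{\Dbar} \mathcal{A}^1]$, placed in degrees $0$ and $1$, is a resolution of $\mathcal{F}$. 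Since sheaf cohomology is the derived functor of the global sections functor and may be computed from any resolution by acyclic sheaves, it suffices to show that $\mathcal{A}^0$ and $\mathcal{A}^1$ have vanishing higher cohomology.

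First I would establish that $\ee$ is a fine sheaf. The underlying topological space of $X$ is its reduced space $\Xre$, a paracompact Hausdorff smooth manifold, and by the local description of $\ee$ a smooth superfunction is a finite sum $\sum_{I,J} f_{IJ} \zeta_I \bar{\zeta}_J$ with each $f_{IJ}$ an ordinary $C^\infty$ function. A smooth partition of unity on $\Xre$ subordinate to a given cover therefore furnishes a partition of unity for $\ee$, so $\ee$ is a fine sheaf of rings and consequently every sheaf of $\ee$-modules is fine. Now $\mathcal{A}^0$ and $\mathcal{A}^1$ are obtained from the locally free $\stsh_X$-module $\mathcal{F}$ by tensoring with the $\ee$-modules $\ee$ and $\w^{0,1}$ respectively, so both are $\ee$-modules, hence fine, hence soft. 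On a paracompact Hausdorff space soft sheaves are acyclic, giving $H^i(X, \mathcal{A}^0) = H^i(X, \mathcal{A}^1) = 0$ for all $i > 0$.

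With acyclicity in hand, the abstract de Rham theorem identifies $H^i(X, \mathcal{F})$ with the cohomology of the complex of global sections
$$ 0 \longrightarrow \Gamma(X, \mathcal{A}^0) \overset{\Dbar}{\longrightarrow} \Gamma(X, \mathcal{A}^1) \longrightarrow 0. $$
As this complex is concentrated in degrees $0$ and $1$, one reads off $H^0(X, \mathcal{F}) = \ker \Dbar$, $H^1(X, \mathcal{F}) = \operatorname{coker} \Dbar$, and $H^i(X, \mathcal{F}) = 0$ for $i \geq 2$, which is exactly the assertion that the cohomology of $\mathcal{F}$ is computed by the sequence of the prior proposition.

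I expect the only genuine subtlety to be the fineness claim, which is where the super structure actually enters. One must check that multiplication by a smooth even bump function is a well-defined $\ee$-module endomorphism of $\mathcal{A}^\bullet$ compatible with the odd generators $\zeta_I, \bar{\zeta}_J$; this is immediate from the local form of $\ee$ but deserves an explicit sentence, since the failure of partitions of unity in the holomorphic category is precisely what makes the $\Dbar$-resolution useful in the first place. Everything downstream is the standard homological machinery of resolutions and derived functors, which carries over verbatim to sheaves of modules on a supermanifold.
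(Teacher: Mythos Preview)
Your argument is correct and is essentially an expanded version of the paper's one-line proof, which simply asserts that both $\mathcal{F}\otimes\ee$ and $\mathcal{F}\otimes\omega^{0,1}$ are acyclic and that the result is then immediate. You have supplied the standard justification for that acyclicity (fineness via smooth partitions of unity) and spelled out the abstract de Rham step, but the approach is the same.
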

\begin{proof}
Immediate from the proposition and the fact that both $\mathcal{F} \otimes \ee$ and $\mathcal{F} \otimes \omega^{0,1}$ are acyclic.
\end{proof}

\bigskip

\subsection{Connection Between the Berezinian and One-Forms}~

In \cite{RSV} an interesting and useful connection was made between one-forms and sections of the Berezinian on a super Riemann surface. By definition of a SUSY family $\pi:X \to S$ one sees that we have the short exact sequence
$$ 0 \longrightarrow \mathcal{D}^{-2} \longrightarrow \Omega^1_{X/S} \longrightarrow \mathcal{D}^{-1} \longrightarrow 0, 
$$
(Here we write $\mathcal{D}^{-2}$ for $(\mathcal{D}^{-1})^{\otimes 2}$). Combining $\Omega^1_{X/S} \to \mathcal{D}^{-1}$ and the isomorphism $\mathcal{D}^{-1} \cong \w$, we get a natural map taking holomorphic one-forms to sections of the Berezinian. In local coordinates $z | \tth$ this is
$$
f(z|\tth) dz + g(z|\tth) d\tth \mapsto ( g(z|\tth) + f(z|\tth) \tth) [dz \, | \, d\tth].
$$

This map cannot be an isomorphism as $\Omega^1_X$ has rank $1|1$ while $\w$ is of rank $0|1$, however in \cite{RSV} it was noticed that upon restriction to $d$-\emph{closed} one-forms, we do get an isomorphism (here $d$ is the usual exterior derivative). The inverse map we denote by $\alpha: \w \to Z^1_X := \{ \text{closed holomorphic one-forms} \}$. It is given in coordinates as, for $\sigma = f(z|\tth) [dz \, | \, d\tth]$,
\begin{equation} \label{alpha}
\alpha(\sigma) : = d\tth f(z|\tth) + \varpi D_{\tth}f(z|\tth),
\end{equation}
where $\varpi := dz - \tth d\tth$ is the local generator of $\mathcal{D}^{-2}$ and $D_{\tth}$ is the usual local generator of the distribution. Note that above we have followed the convention in \cite{RSV} and have written the coefficient functions \emph{to the right} of the forms $d\tth$ and $\varpi$. 

One can check that the local coordinate definition (\ref{alpha}) is well-defined and gives a genuine map $\alpha: \w \to Z^1_X$. A coordinate invariant description of $\alpha$ is described in \cite{wit1}; it is related to the notion of \emph{picture number} and \emph{picture changing operators} in string theory. We will not need these notions here, thus the definition (\ref{alpha}) suffices. 

Hence, we can associate to each section of the Berezinian $\sigma$ a closed holomorphic one-form $\alpha(\sigma)$.

\bigskip

\subsection{Residues and Contour Integrals}~

For what follows we will use the notion of an integral of a $k$-form over a $k | 0$ dimensional submanifold. The definition is the following: given a submanifold $N = N^{k | 0}$ of a $m|n$ dimensional supermanifold $M = M^{m|n}$ described by a map $\varphi: N \to M$, for $\eta$ a $k$-form on $M$, $\varphi^*\eta$ is a $k$-form (a top form) on $N$ and hence can be integrated, we write
$$
\int_N \eta : = \int_{N}\varphi^*\eta.
$$

On a super Riemann surface $X$, this allows us to discuss residues of meromorphic sections of the Berezinian $\w$ at points in the reduced space. Namely, for $\sigma$ a meromorphic section of $\w$ and $p$ a point of $X_{\tred}$, we define the residue of $\sigma$ at $p$ by the formula
\begin{equation} \label{resdef}
\text{res}_{p} \sigma := \frac{1}{2 \pi i} \oint_{\gamma} \alpha(\sigma).
\end{equation}
Here $\gamma: I \to X $ is any simple closed curve enclosing $p$, given by a map from an interval $I$ to $X$ and viewed as a real submanifold of dimension $1|0$. The map $\alpha$ is that of (\ref{alpha}). Stokes' Theorem gives then that the fact that $\alpha$ has image in closed one-forms guarantees that (\ref{resdef}) does not depend on the choice of curve $\gamma$.

In local coordinates $z | \tth $ such that $p$ is defined by $z = z_0$, $\tth = \tth_0$ we expand $\sigma$ in a Laurent series
$$
\sigma = \left ( \sum_{k = -N}^{\infty} (a_k + (\tth-\tth_0)b_k) (z-z_0 -\tth\tth_0)^k \right ) [dz \, | \, d\tth],
$$
then a straight forward calculation will show that
\begin{equation} \label{residue}
 \text{res}_p \sigma = b_{-1}.   
\end{equation}

Of course, the residue is coordinate independent in view of (\ref{resdef}), but (\ref{residue}) allows for straightforward computations. Additionally, we have the slightly more general formula, which is verified by direct computation; if one writes in the coordinates above $\sigma = (z-z_0-\tth\tth_0)^{-1} f(z|\tth) [dz \, | \, d\tth]$ then
\begin{equation} \label{res2}
\text{res}_p \sigma = (D_{\tth}f)(z_0 | \tth_0).
\end{equation}

\bigskip

\subsection{Serre Duality}~

For a SUSY family $\pi: X \to S$ we remarked above that the relative Berezinian sheaf is Serre dualizing  $\omega = \ber X/S$. That is for any invertible sheaf $\Ll$ (rank $1  | 0$ or $0 | 1$) we have for $i=0,1$ a perfect pairing \cite{smoothfromcomplex}, \cite{elmsg},
$$
R^i\pi_*(\Ll) \otimes R^{1-i}\pi_*(\Ll^* \otimes \omega) \to R^1\pi_*(\omega)
$$
along with a trace map
$$
\text{tr}: R^1\pi_*(\omega) \to \stsh_S.
$$
Using $\Dbar$ cohomology the trace map tr is simply the Berezin integral along the fibers. We will not prove this, but one can see that indeed the Berezin integral gives a map
$$
\int_{X/S}: \omega^{1,1} \to \stsh_S
$$
and an analog of Stokes' theorem can be established so that the Berezin integral of $\Dbar$-exact fields are zero, which implies that $\int_{X/S}$ decends to a map on the quotient $R^1\pi_*(\omega) \rightarrow \stsh_S$.

\bigskip

\subsection{Punctures}~

Scattering amplitudes of superstring theory are written as integrals over moduli spaces of slightly more general objects than strictly super Riemann surfaces. These are \emph{punctured super Riemann surfaces} which we discuss now.

There are two types of punctures one can consider in the theory of SUSY curves, known as Neveu-Schwarz and Ramond punctures. Neveu-Schwarz punctures are more familiar, while Ramond punctures are a bit exotic. We focus on Ramond punctures first.

\medskip

\subsubsection{Ramond Punctures}~ \label{ramondpunctures}

Suppose $\pi: X \to S$ of $1 | 1$ is a family of supercurves along with an odd distribution $\mathcal{D} \subset \mathcal{T}_{X/S}$ such that the Lie bracket
$$
\mathcal{D}^{\otimes 2} \xrightarrow{[\cdot, \cdot]} \mathcal{T}_{X/S}/\mathcal{D}
$$
fails to be an isomorphism along a relative divisor $\mathcal{F}$, in the sense that instead $[\cdot, \cdot]$ induces an isomorphism
$$
\mathcal{D}^{\otimes 2} \xrightarrow{[\cdot, \cdot]} \mathcal{T}_{X/S}/\mathcal{D} \otimes \stsh_X(-\mathcal{F}).
$$
In this case the family $\pi: X \to S$ is called \emph{a family of super Riemann surfaces with Ramond punctures} or a \emph{family of SUSY curves with Ramond punctures}. The divisor $\mathcal{F}$ is called the \emph{Ramond divisor}. If we write $\mathcal{F}$ as a sum of minimal divisors (irreducible divisors)
$$
\mathcal{F} = \sum_{k=1}^{n_R} \mathcal{F}_k,
$$
then each $\mathcal{F}_k$ is called a \emph{Ramond puncture}. It can be easily shown that the number of Ramond punctures is in fact even. One can think of a Ramond puncture as a ``puncture" in the distribution $\mathcal{D}$ itself.

Locally near a Ramond puncture $\mathcal{F}_k$ we can find a coordinate chart $z | \zeta$ so that $\mathcal{F}_k$ is given by $z = 0$ and that $\mathcal{D}$ is locally generated by $D^*_{\zeta} = \partial_\zeta + z \zeta \partial z$ (say, in the complex topology). Such coordinates are also called \emph{superconformal}. The usual exact sequence now becomes
$$ 0 \longrightarrow \mathcal{D} \longrightarrow \mathcal{T}_X \longrightarrow \mathcal{D}^{2}(\mathcal{F}) \longrightarrow 0. $$
Dualizing and taking Berezinians we conclude $\omega = \ber X/S \cong \mathcal{D}^{-1}(-\mathcal{F})$. In fact, in this case $\omega$ remains a relative dualizing sheaf.

We denote the moduli space of super Riemann surfaces with $\nr$ Ramond punctures by $\msp$. One has that \cite{wit1} the tangent sheaf $\mathcal{T}_{\msp}$ is given by $R^1\pi_*\mathcal{W}$ where $\mathcal{W}$ is the sheaf of infinitesimal automorphisms, namely the sheaf of vector fields that preserve the distribution $\mathcal{D}$ (in the sense that $[\mathcal{W}, \mathcal{D}] \subset \mathcal{D}$). In local coordinates one can easily verify \cite{vids} that as sheaves of $\mathbb{C}$-vector spaces $\mathcal{W} \cong \mathcal{D}^2$. 

\medskip

\subsubsection{Neveu-Schwarz Punctures}~ \label{nspuctures_sec}

Suppose now that $\pi: X \to S$ is a SUSY family. A \emph{Neveu-Schwarz} (NS) puncture is simply a section $s: S \to X$ of the map $\pi$. Such a section is locally $ \spec A \to \spec A[z | \zeta] $ and hence equivalent to a map of supercommutative rings $A[z | \zeta] \to A$ which in turn is simply a choice of an even and odd element of $A$. Hence it is common to say that an NS puncture is given in local superconformal coordinates by $z = z_0, \zeta = \zeta_0$ for some choice of even and odd functions $z_0, \zeta_0$ on the base $S$. 

Given an NS puncture $s$ we have a natural associated divisor using the distribution $\mathcal{D}$. Namely, we use $s$ to pullback $\mathcal{D}$ and then take its total space $s^*\mathcal{D}^{\text{tot}}$.
$$
\begin{tikzcd}
(s^*\mathcal{D})^{\text{tot}} \arrow[r] \arrow[d]
& \mathcal{D}^{\text{tot}} \arrow[d, "p"] \\
S \arrow[r, "s"] & X
\end{tikzcd}
$$
This gives a subvariety $(s^*\mathcal{D})^{\text{tot}} \to X$, which is of relative dimension $0 | 1$ over $S$. We will denote this subvariety associated to $s$ by div$(s)$. Given such a family $\pi:X \to S$ with $n_{NS}$ NS punctures $s_1, \dots, s_{n_{NS}}$, we denote by $N = \sum_{j=1}^{n_{NS}} \text{div}(s_j)$ the \emph{Neveu-Schwarz divisor}.

We denote the moduli space of super Riemann surfaces with $n_{NS}$ Neveu-Schwarz punctures by $\mathfrak{M}_{g;n_{NS}}$. Similar to the Ramond puncture case, we have that the tangent sheaf $\mathcal{T}_{\mathfrak{M}_{g;n_{NS}}}$ is given by $R^1\pi_*\mathcal{W}$ for $\mathcal{W}$ the sheaf of infinitesimal automorphisms. The difference here is that $\mathcal{W}$ now is the sheaf of vector fields that preserves the distribution $[\mathcal{W},\mathcal{D}] \subset \mathcal{D}$ and must vanish along the Neveu-Schwarz divisor $N$. In local coordinates one can then see \cite{vids} that $\mathcal{W} \cong \mathcal{D}^2(-N)$ as sheaves of $\mathbb{C}$-vector spaces.

\bigskip

\subsection{The Super Mumford Isomorphism}~

We follow \cite{vormum} closely. Let $\pi: X \to S$ denote a family of $1 | 1$ supercurves. For any locally free sheaf $\mathcal{F}$ on $X$ we can consider the invertible sheaf $B(\mathcal{F})$ on $S$, called the \emph{Berezinian of cohomology of} $\mathcal{F}$. If each $R^i\pi_* \mathcal{F}$ is locally free, then $B(\mathcal{F})$ is given by
$$
B(\mathcal{F}) = \otimes_{i} \, (\ber R^i\pi_*\mathcal{F})^{(-1)^i}.
$$
Moreover, for every short exact sequence of locally free sheaves on $X$
$$
0 \longrightarrow \mathcal{F}' \longrightarrow \mathcal{F} \longrightarrow \mathcal{F}'' \longrightarrow 0,
$$
we get a canonical isomorphism
$$
B(\mathcal{F}') \otimes B(\mathcal{F}'') \cong B(\mathcal{F}).
$$
Hence, in particular any isomorphism $f: \mathcal{F} \to \mathcal{G}$ induces an isomorphism $B(f): B(\mathcal{F}) \to B(\mathcal{G})$. 

For $\w = \ber \Omega^1_{X/S}$ the relative Berezinian, we set for each $j$,
$$ \lambda_{j/2} = B(\w^{\otimes j}). $$
Serre duality gives the canonical identifications $\lambda_{j/2} \cong \lambda_{(1-j)/2}$. The super Mumford isomorphism(s) are the following canonical isomorphisms amongst the $\lambda_{j/2}$.

\begin{prop} \label{smumiso}
For any family of $1 | 1$ supercurves $\pi: X \to S$ we have canonical isomorphisms
$$
\lambda_{j/2} \cong \lambda_{1/2}^{(-1)^{j-1}(2j-1)}.
$$
In particular,
$$
\lambda_{3/2} \cong \lambda_{1/2}^5.
$$
\end{prop}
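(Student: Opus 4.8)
The plan is to reduce the statement to the classical Mumford isomorphism on the reduced curve, using two structural facts already at hand: the multiplicativity of the Berezinian of cohomology $B$ over short exact sequences, and the Serre-duality symmetry $\lambda_{j/2}\cong\lambda_{(1-j)/2}$. The latter tells us the exponent we are chasing must be symmetric about $j=\tfrac12$ and in particular must vanish there, which is a useful consistency check. The real content is a computation of $B(\w^{\otimes j})$ through a filtration. Write $\J\subset\stsh_X$ for the ideal of odd functions and $\mathcal{L}=\J/\J^2$ for the associated line bundle on the reduced curve $X_{\text{red}}$ cut out by $\J$. The sheaf $\w^{\otimes j}$ carries the two-step filtration $0\to\J\,\w^{\otimes j}\to\w^{\otimes j}\to\w^{\otimes j}/\J\,\w^{\otimes j}\to0$, and a local superconformal computation identifies the two graded pieces with line bundles $A_j$ (the quotient) and $B_j=A_j\otimes\mathcal{L}$ (the sub) on $X_{\text{red}}$, of opposite parity, with $A_j$ of parity $j$. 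Since $B$ is multiplicative over short exact sequences it factors through the $K$-theory class, so $B(\w^{\otimes j})$ is the Berezinian-weighted combination of the ordinary determinants of cohomology of $A_j$ and $B_j$; the crucial point is that the Berezinian assigns the even and the odd graded piece the \emph{opposite} exponents $\pm1$, giving $B(\w^{\otimes j})\cong(\det R\pi_*A_j)^{(-1)^{j}}\otimes(\det R\pi_*B_j)^{(-1)^{j+1}}$ with $\det R\pi_*$ the classical determinant of cohomology for $X_{\text{red}}$.

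The engine of the proof is then the classical Mumford computation, i.e.\ Grothendieck--Riemann--Roch for $\det R\pi_*$, applied to $A_j$ and $B_j$. Their degrees are affine-linear in $j$ with the \emph{same} slope, so the resulting Chern classes, although quadratic in $j$, share the same leading $j^2$ coefficient; since $A_j$ and $B_j$ enter $B(\w^{\otimes j})$ with \emph{opposite} signs, the quadratic terms cancel. What remains is linear in $j$ and carries the alternating factor $(-1)^{j-1}$ from the parity exponents, yielding $c_1(\lambda_{j/2})=(-1)^{j-1}(2j-1)\,c_1(\lambda_{1/2})$; specializing to $j=3$ gives the exponent $5$. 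The surviving expression vanishes at $j=\tfrac12$, precisely the Serre-symmetric point, matching the consistency check above. This cancellation of the quadratic term is the entire reason the super exponent is linear in $j$ rather than quadratic as in the bosonic case, and it uses nothing about the distribution $\mathcal{D}$ beyond $\w=\ber\Omega^1_{X/S}$, so the argument covers arbitrary families of $1\,|\,1$ supercurves.

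The hard part will be upgrading this numerical identity to a \emph{canonical} isomorphism of line bundles, natural in the family and valid over an arbitrary base, in particular over non-reduced and non-split bases where $\w^{\otimes j}$ has no actual splitting $A_j\oplus B_j$ but only the filtration above. I would do this by working throughout with the Knudsen--Mumford determinant-of-cohomology functor and the functorial (Deligne) form of Riemann--Roch, which deliver the Mumford isomorphism as a canonical isomorphism rather than a mere equality of Chern classes, and by transporting those canonical isomorphisms through the filtration via the multiplicativity of $B$; the delicate point is to match the homological signs of the determinant of cohomology against the parity signs forced by the Berezinian, and to see that the quadratic Deligne-pairing contributions of $A_j$ and $B_j$ cancel at the level of line bundles and not just classes. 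Equivalently, one can package the whole computation as a single super-Grothendieck--Riemann--Roch statement for the $1\,|\,1$ fibration, where the odd fiber direction enters with the opposite sign and annihilates the quadratic part outright; the residual obstacle is then to justify that super-GRR at the level of canonical isomorphisms.
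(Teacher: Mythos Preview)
The paper does not actually prove Proposition~\ref{smumiso} in the generality stated; immediately after the statement it defers to \cite{vormum} and \cite{vids}. What the paper \emph{does} carry out explicitly is the case $\lambda_{3/2}\cong\lambda_{1/2}^5$ (Sections~2.2--2.7 in the Ramond setting, Section~3 in the Neveu--Schwarz setting), and that argument follows a route quite different from yours. One chooses an odd global section $t'$ of $\w^{-1}$ (respectively $\nu'$ of $\w$), sets $t=\Pi t'$, and uses multiplication by $t$ to produce short exact sequences of the form $0\to\Pi\w^{j+1}\xrightarrow{t}\w^{j}\to\w^{j}|_{T}\to0$ with $T=\{t=0\}$. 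Applying $B$ and Serre duality yields relations among the $\lambda_{j/2}$ contaminated by factors $B(\w^{j}|_T)$; the decisive input is Lemma~\ref{canisodiv}, which says that for a relative divisor $D$ of dimension $0\,|\,1$ the sheaf $\ber\pi_*(\Ll|_D)$ is canonically independent of the invertible sheaf $\Ll$. That lemma identifies all the $B(\w^{j}|_T)$ with one another and leaves exactly $\lambda_{3/2}\cong\lambda_{1/2}^5$. Everything takes place directly on the super family $X\to S$; there is no reduction to $X_{\tred}$ and no bosonic GRR.

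Your approach is conceptually appealing --- the cancellation of the quadratic term when the even and odd graded pieces enter $B$ with opposite exponents is precisely the right heuristic for why the super exponent is linear rather than quadratic in $j$ --- but the gap you label ``the hard part'' is more serious than you indicate. Over a super base $S$ the ideal $\J\subset\stsh_X$ contains the pullback of the odd ideal of $S$, so the graded pieces $\w^{j}/\J\w^{j}$ and $\J\w^{j}$ are $\stsh_{X_{\tred}}$-modules, annihilated by $\J_S$, hence not locally free over $\stsh_S$; the multiplicativity of $B$ therefore does not apply, and your computation lives on $S_{\tred}$ only. (Replacing $\J$ by a ``relative'' odd ideal does not help: that ideal is only well-defined after choosing a splitting, which is exactly what fails for non-split families.) You thus obtain a canonical trivialization of $\lambda_{3/2}\lambda_{1/2}^{-5}$ on $S_{\tred}$, but the lift of a trivialization through the nilpotent thickening $S_{\tred}\hookrightarrow S$ is only determined up to a unit in $1+\J_S$, which is precisely the ambiguity a \emph{canonical} isomorphism must exclude. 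Invoking functorial Deligne--Riemann--Roch does not close this on its own, since that package also lives on the reduced family; what is needed is a super-GRR valid as an isomorphism of line bundles on $S$, which you point toward but do not supply. The virtue of the paper's (Voronov's) method is that it never leaves the super world: the short exact sequences and Lemma~\ref{canisodiv} are statements on $X\to S$ itself, so the resulting isomorphism is canonical over $S$ from the start, bypassing exactly the obstacle your outline runs into.
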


Proofs of the super Mumford isomorphisms can be found in \cite{vormum} and \cite{vids} . We will denote by $\mu$ the trivializing section of $\lambda_{3/2} \lambda_{1/2}^{-5}$. Such an object is called the \emph{super Mumford form}. 

In the following we will consider two separate situations:
\begin{enumerate}
    \item $\pi: X \to S$ is a family of super Riemann surfaces of genus $g \geq 2$ with $\nr$ Ramond punctures such that:
\begin{enumerate}
    \item The sheaves $R^i\pi_*\w^j$ are locally free for $i=0,1$, $j=-2, -1, 0, 1$.
    \item $\nr > 6g-6$.
\end{enumerate}
    \item $\pi: X \to S$ is a family of super Riemann surfaces of genus $g \geq 2$ with $n_{\text{NS}}$ Neveu-Schwarz punctures such that:
\begin{enumerate}
    \item The sheaves $R^i\pi_*\w^j$ are locally free for $i=0,1$, $j=0, 1, 2, 3$.
    \item $\pi_*\w$ has rank $g | 1$.
\end{enumerate}
\end{enumerate}
In each case we produce a concrete proof of the corresponding super Mumford isomorphism $\lambda_{3/2} = \lambda_{1/2}^5$ and use it to produce the main results.

We make heavy use of the higher direct image sheaves of the relative Berezinain $R^i\pi_*\w^j$ and emphasize that in both situations we work under the \emph{assumption} that these sheaves are locally free. We pause to discuss this assumption more in depth at the end of Section \ref{somerierochcalc}. 

The conditions listed for the Neveu-Schwarz case speak to the fact that we work over the component of the moduli space $\mathfrak{M}_{g;n_{NS}}$ corresponding to an odd spin structure. Hence, fiberwise $\Pi \w_{\tred}$ gives an odd nondegenerate theta characteristic.

We begin with the Ramond case.

\section{The Ramond Puncture Case}

Here we derive our first main result: an explicit formula for the super Mumford form $\mu$ on the moduli space of super Riemann surfaces with $\nr$ Ramond Punctures $\msp$. This can then be used to create a measure on $\msp$ whose integral computes scattering amplitudes of superstring theory. The following arguments were heavily inspired by the work done in \cite{belman}, \cite{vormum} and \cite{RSVphy}.

Throughout this section we let $\pi: X \to S$ denote a family of genus $g \geq 2$ SUSY curves with $\nr$ Ramond punctures with $R^i\pi_*\w^j$ locally free and $\nr > 6g-6$. We denote the Ramond divisor by $\mathcal{F}$.

\bigskip

\subsection{Some Riemann-Roch Calculations}~ \label{somerierochcalc}

In the special case that $S$ is a point the structure sheaf $\stsh_X$ admits the global decomposition $\stsh_X = \stsh_{X_{\tred}} \oplus \J$. This allows one to decompose any super holomorphic line bundle $\Ll$ into the direct sum of two ordinary holomorphic line bundles over $X_{\tred}$ as
\begin{equation} \label{decomp1}
\Ll = \Ll_{\text{red}} \oplus \left( \Ll_{\text{red}} \otimes \J \right ).
\end{equation}
Here $\J \subset \stsh_X$ again denotes the sheaf of ideals generated by the odd elements. The summands above are exactly the even and odd parts of $\Ll$. We are most interested in the case $\Ll = \w^{\otimes j} = \w^j$ with our goal being to identify the ranks of these bundles. By the local freeness assumption (and the cohomology and base change Theorem), to compute these ranks it suffices to assume that $S$ is a point. Thus by the decomposition (\ref{decomp1})
\begin{equation} \label{rank1}
\text{rank} \, R^i\pi_*\Ll = h^i(\Ll_{\text{red}}) \, | \,  h^i(\Ll_{\text{red}} \otimes \J)
\end{equation}
for $\Ll$ of rank $1 | 0 $ (and vice versa for $\Ll$ or rank $0 | 1 $).

In Section \ref{ramondpunctures} we saw that the Berezinian sheaf $\w$ was identified with $\mathcal{D}^{-1}(-\F)$. Furthermore, one can easily see \cite{wit1} that if $S$ is a point, we have that $\mathcal{T}_{X_\tred} = \mathcal{D}^2(\mathcal{F})_\tred$. Thus, the distribution $\mathcal{D}$ had degree $1-g-\nr/2$. Arguing in this fashion, i.e. utilizing the classical Riemann-Roch theorem on $X_{\tred}$, a slightly tweaked Proposition \ref{jspin} and the assumption that $\nr > 6g-6$, allows one to complete the tables below of the various ranks of the sheaves $R^i\pi_*\w^j$. The calculations involved are somewhat tedious and we omit them here.
$$
\begin{array}{c|c|c|c}
j & \text{rank } \pi_*\wre^j & \text{rank } \pi_*(\wre^j \otimes \J) & \text{rank } \pi_*\omega^j \\ \hline
-2 &\nr + 3 - 3g & 3\nr/2 + 2 - 2g & \nr + 3 - 3g \, | \, 3\nr/2 + 2 - 2g \\
-1 &\nr/2 + 2 - 2g & \nr + 1 - g & \nr + 1 - g \,|\, \nr/2 + 2 - 2g   \\
0 & 1 & \nr/2 & 1 \, | \, \nr/2 \\
1 & 0 & g & g \, | \, 0 \\
\end{array}
$$
$$
\begin{array}{c|c|c|c}
j & \text{rank } R^1\pi_*\wre^j & \text{rank } R^1\pi_*(\wre^j \otimes \J) & \text{rank } R^1\pi_*\omega^j \\ \hline
-2 & 0 & 0 & 0 \, | \, 0 \\ 
-1 & 0 & 0 & 0 \, | \, 0   \\ 
0 & g & 0 & g \, | \, 0 \\ 
1 & \nr/2 & 1 & 1 \, | \, \nr/2 \\ 
\end{array}
$$

\medskip

Let us now address the issue of local freeness of the sheaves $R^i\pi_*\omega^j$ on the base $S$. Specifically for our purposes in the Ramond case we are interested in the local freeness of those sheaves with $i=0, 1$ and $j=-2, -1, 0, 1$. Common cohomology and base change arguments give immediately that the sheaves $R^i\pi_*\omega^j$ with $i=0,1$, $j=-1, -2$ are indeed locally free. Unfortunately for the others, it seems that there is no elementary argument to guarantee their local freeness. In fact, similar issues have been discussed in the literature before. A result in \cite{bruzzo2} essentially shows that there is no super version of Grauert's classical theorem of algebraic geometry, which would yield the desired result. The author would like to thank E. Witten for his helpful and stimulating comments regarding this issue.

\bigskip

\subsection{The Super Mumford Isomorphism $\lambda_{3/2}\lambda_{1/2}^{-5} \cong \stsh_S$ Explicitly }~

Our goal now is to prove explicitly the super Mumford isomorphism $\lambda_{3/2}\lambda_{1/2}^{-5} \cong \mathcal{O}_S$, following every step carefully to identify the trivializing section $\mu$ corresponding to the image of $1_S$ under the above isomorphism. We will express $\mu$ in terms of bases chosen for the locally free sheaves $R^i\pi_*\omega^j$.

The isomorphism will follow from 3 short exact sequences of sheaves on $X$

\begin{equation}
\label{firstSES} 0 \longrightarrow \Pi\omega \overset{t}{\longrightarrow} \oo \longrightarrow \oo |_{T} \longrightarrow 0,
\end{equation}

\begin{equation}
\label{secondSES} 0 \longrightarrow \oo \overset{t}{\longrightarrow} \Pi\w^{-1} \longrightarrow (\Pi\w^{-1}) |_{T} \longrightarrow 0,
\end{equation}

\begin{equation}
\label{thirdSES} 0 \longrightarrow \Pi\omega^{-1} \overset{t}{\longrightarrow} \w^{-2} \longrightarrow \w^{-2} |_{T} \longrightarrow 0,
\end{equation}
where $t = \Pi t'$ for $t'$ is an odd global section of $\w^{-1}$, and $T$ is the divisor $\{ t = 0 \}$. From these one concludes utilizing Serre duality (noting $B(\Pi \mathcal{F}) = B^{-1}(\mathcal{F})$)

\begin{equation} \label{3eqns}
    B(\oo|_T)\cong \lambda^2_{1/2}, \hspace{.25in} B((\Pi \w^{-1})|_T) \cong \lambda^{-1}_1 \lambda^{-1}_{1/2}, \hspace{.25in} \text{and} \hspace{.25in} B(\w^{-2}|_T) \cong \lambda_{3/2} \lambda_1.
\end{equation}

An important lemma is shown in \cite{vormum} and \cite{vids}, a proof of which is also given in Lemma \ref{canisodiv}, stating that given any invertible sheaves $\mathcal{L}$ and $\mathcal{K}$ of on $X$ and any relative divisor $D$ of dimension $0|1$ over the base, we have a canonical isomorphism $B(\mathcal{L}|_D) \cong B(\mathcal{K}|_D)$. Using this result we get that the left hand sides of the three equations in (\ref{3eqns}) are all canonically identified and thus,
$$ 
\lambda^2_{1/2} \cong \lambda^{-1}_1 \lambda^{-1}_{1/2}, \hspace{.1in} \lambda_{3/2} \lambda_1 \cong \lambda^2_{1/2} \,\,\, \Longrightarrow \,\,\, \lambda_{3/2} \cong \lambda^5_{1/2}, 
$$
giving the super Mumford isomorphism.

We will follow the above argument in detail to identify $\mu$ in terms of specified bases for the sheaves $R^i\pi_*\w^j$.

\bigskip

\subsection{Various Bases}~ \label{variousbases}

To simplify notation we let $r := \nr/2 - g + 1$. We choose a distinguished odd global section $t'$ of $\w^{-1}$ such that $t'_{\tred}$ vanishes to first order at points $q_1, \dots, q_r$. Set $\Pi t' = t$ and near each point $q_k$ choose local superconformal coordinates $ z_k | \tth_k $ centered at $q_k$ so that $t$ expands in these coordinates as
$$ t' \sim z_kf_k(z_k | \tth_k)\bpk. $$
We denote by $T$ the divisor $\{ t' = 0 \} = \{t = 0 \}$ and assume it is disjoint from the Ramond divisor $\mathcal{F}$ (which is an open condition).

\bigskip 

\subsection*{Local Basis for $\pi_*\w$:}~

The rank of $\pi_*\w$ is $g | 0$, thus we choose an (even) basis 
$$ 
\mathcal{B}_{\w} := \{ \varphi_1, \dots, \varphi_g \}.
$$
Near each $q_k$ we expand
$$ \varphi_j \sim (\varphi_j^{k,-} + \varphi_j^{k,+}\tth_k + O(z_k)) [dz_k \, | \, d\tth_k], $$
where $\varphi_j^{k,\pm}$ are even/odd functions from the base. 

\subsection*{Local Basis for $\pi_*\stsh$:}~

We take 
$$ 
\mathcal{B}_{\stsh} := \{1 \, | \, t'\varphi_1, \dots, t'\varphi_g, \xi_1, \dots \xi_{r-1} \},
$$
to be a basis in $\pi_*\oo$ where
$$ \xi_j \sim (\xi_j^{k,-} + \xi_j^{k,+}\tth_k + O(z_k)) $$
near $q_k$.

\subsection*{Local Basis for $\pi_*\w^{-1}$:}~

Let 
$$
\mathcal{B}_{\w^{-1}} := \{ t'^2\varphi_1, \dots, t'^2\varphi_g, t'\xi_1, \dots, t'\xi_{r-1}, \sigma_1, \dots \sigma_r \, | \, t', \tau_1 \dots, \tau_{r-g} \},
$$
be a basis for $\pi_*\w^{-1}$. In local coordinates near each $q_k$ expand
$$ \sigma_j \sim (\sigma_j^{k,-} + \sigma_j^{k,+}\tth_k + O(z_k)) \bpk, $$
$$ \tau_j \sim (\tau_j^{k,+} + \tau_j^{k,-}\tth_k + O(z_k)) \bpk. $$

\subsection*{Local Basis for $\pi_*\w^{-2}$:}~

Let 
\begin{align*}
\mathcal{B}_{\w^{-2}} : = \{t'^2, t'\tau_1&, \dots, t'\tau_{r-g}, \eta_1, \dots, \eta_r \, |
\\
& \, t'^3\varphi_1, \dots, t'^3\varphi_g, t'^2\xi_1, \dots, t'^2\xi_{r-1}, t'\sigma_1, \dots t'\sigma_r, \psi_1, \dots, \psi_r \}, 
\end{align*} 
be a basis for $\pi_*\w^{-2}$ and as above expand near each $q_k$
$$ \eta_j \sim (\eta_j^{k,+} + \eta_j^{k,-}\tth_k + O(z_k))\bpk^2, $$
$$ \psi_j \sim (\psi_j^{k,-} + \psi_j^{k,+}\tth_k + O(z_k))\bpk^2. $$

\subsection*{Local Basis for $\pi_*(\w^j|_T)$:}~

We have singled out a specific odd global section $t'$ of $\w^{-1}$, for which we defined a divisor $T = \{t' = 0\}$. We assume this is disjoint from the Ramond Divisor $\mathcal{F}$ and then take
$$
\mathcal{B}_{\w^j|_T} := \{ [\partial_{z_1}\,|\,\partial_{\tth_1}]^j, \dots, [\partial_{z_r}\,|\,\partial_{\tth_r}]^j \, | \, \theta_1[\partial_{z_1}\,|\,\partial_{\tth_1}]^j, \dots, \theta_r[\partial_{z_r}\,|\,\partial_{\tth_r}]^j \} $$
to be a basis in $\w^j|_T$ (for $j=0$ we denote by $1_k = \bpk^0$ the element which is the function $1$ near each $q_k$, similarly $1_k\tth_k$ will sometimes be shortened to $\tth_k$). To shorten notation we will sometimes use $\varpi_k$ for $\bpk$.

Now by Serre duality we identify $R^1\pi_*\w^j \cong (\pi_*\w^{1-j})^*$, and hence by taking dual bases we get local bases $\mathcal{B}_{\stsh}^*$ and $\mathcal{B}_{\w}^*$  for $R^1\pi_*\w$ and $R^1\pi_*\stsh$ respectively (recall from Section \ref{somerierochcalc} that both $R^1\pi_*\w^{-1}$ and $R^1\pi_*\w^{-2}$ vanish). 

The five (ordered) bases above give rise to generating elements of their respective Berezinian of cohomology, 
\begin{equation} \label{gens_1}
\begin{split}
d_{1/2} & := \ber \mathcal{B}_{\w} \otimes \ber \mathcal{B}^*_{\stsh} \in \lambda_{1/2}\\
d_0 & := \ber \mathcal{B}_{\stsh} \otimes \ber \mathcal{B}^*_{\w} \in \lambda_{0} \\
d_{-1/2} & := \ber \mathcal{B}_{\w^{-1}} \in \lambda_{-1/2} \\
d_{-1} & := \ber \mathcal{B}_{\w^{-2}} \in \lambda_{-1} \\
\delta_{j/2} & := \ber \mathcal{B}_{\w^j|_T} \in B(\w^j|_T).
\end{split}
\end{equation}

\smallskip

\subsection{The First Short Exact Sequence}~ \label{firstSESsec}

We now study in detail the first short exact sequence shown in \eqref{firstSES}. Considering the induced long exact sequence in cohomology and utilizing Serre duality we obtain (Note: $R^1\pi_*(\stsh|_T)$ will vanish as $T$ has relative dimension $0|1$),
\begin{equation} \label{ses1}
0 \longrightarrow \Pi(\pi_*\w) \overset{t}{\longrightarrow} \pi_*\oo \longrightarrow \pi_*(\oo|_{T}) \longrightarrow \Pi(\pi_*(\oo))^* \overset{t^*}{\longrightarrow} (\pi_*(\w))^* \longrightarrow 0.
\end{equation}

Under the first map the (odd) basis $\Pi\{\varphi_1, \dots \varphi_g\}$ of $\Pi(\pi_*\w)$ maps to $\{t\varphi_1, \dots t\varphi_g\}$ (note the presence of $t$ and not $t'$) which is then completed to the chosen basis $\mathcal{B}_{\stsh} = \{ 1 \, | \, t\varphi_1, \dots, t\varphi_g, \xi_1, \dots, \xi_{r-1} \}$ of $\pi_*\oo$. The restriction map then sends the $t\varphi_j$'s to zero and the remaining basis elements to $\{ 1|_T \, | \, \xi_1|_T, \dots, \xi_{r-1}|_T \}$. In terms of our chosen (ordered) basis $\mathcal{B}_{\stsh|_T}$ for $\pi_*(\oo|_T)$, these are in components
\begin{align} \label{res1}
\begin{split}
1|_T = & \sum_{k=1}^{r} 1_k, \\
\xi_j|_T = & \sum_{k=1}^{r} \xi_j^{k,-}1_k + \sum_{k=1}^{r} \xi_j^{k,+}\tth_k1_k.
\end{split}
\end{align}

We will complete $\{ 1|_T \, | \, \xi_1|_T, \dots, \xi_{r-1}|_T \}$ to a basis by taking certain lifts of elements of $\Pi(\pi_*(\oo))^*$ and compare this with $\mathcal{B}_{\stsh|_T}$. To calculate lifts we must analyze the connecting homomorphism $\delta: \pi_*(\stsh|_{T}) \to R^1\pi_*(\Pi \w)$ and its composition with the Serre dual map $\pi_*(\stsh|_{T}) \overset{\delta}{\to} R^1\pi_*(\Pi \w) \to \Pi(\pi_*(\stsh))^*$.

We compute the map $\delta$ using $\bar{D}$ cohomology as explained in Section \ref{dbarcoho}. Let $\mathcal{E}$ denote the sheaf of smooth super functions on $X$, then given any superholomorphic vector bundle $\mathcal{F}$ we have a natural acyclic resolution as in Proposition \ref{dbarSES}. This gives the exact diagram,
\begin{equation} \label{diag1}
\begin{tikzcd}
 & 0 \arrow[d] & 0 \arrow[d]  & 0 \arrow[d] &  \\
0 \arrow[r] & \Pi \w \arrow[d] \arrow[r, "t"] & \stsh \arrow[d] \arrow[r] & \stsh|_{T} \arrow[d] \arrow[r] & 0 \\
0 \arrow[r] & \Pi\w \otimes \mathcal{E}  \arrow[r, "t"] \arrow[d, "\bar{D}"] & \oo \otimes \mathcal{E} \arrow[d, "\bar{D}"] \arrow[r] & \oo|_T \otimes \mathcal{E} \arrow[d, "\bar{D}"] \arrow[r] & 0 \\
0 \arrow[r] & \Pi\w \otimes \w^{0,1} \arrow[r, "t"] \arrow[d] & \oo \otimes \w^{0,1} \arrow[r] \arrow[d] & \oo|_T \otimes \w^{0,1} \arrow[r] \arrow[d] & 0 \\
& 0 & 0  & 0  &  \\
\end{tikzcd}.
\end{equation}

The connecting homomorphism is computed in this diagram by following the zigzag pattern starting from the upper right position at $\stsh|_{T}$ to $\Pi \w \otimes \w^{0,1}$ at the lower left, then looking at the image in the quotient.

We can describe $\delta$ in the following way: let $s$ be a section of $\pi_*(\stsh|_{T})$ and choose a global real valued smooth bump function $\rho$ on $X$ that is identically equal to $1$ in a neighborhood of each $q_k$. Then the expression $\rho s$ defines a lift of $s$ to a global smooth superfunction on $X$. Differentiating and lifting under the multiplication by $t$ map gives that $\delta(s)$ is the cohomology class
\begin{align}
\delta(s) = \left [ \frac{\bar{D}(\rho s)}{t} \right ] \in R^1\pi_*(\Pi \w).
\end{align}

Let $\langle \cdot, \cdot \rangle$ denote the Serre duality pairing, then $\delta(s)$ becomes the functional
\begin{equation} \label{sd1}
\langle \cdot, \delta(s) \rangle = \frac{1}{2 \pi i}\int_{X/S}(-) \frac{\bar{D}(\rho s)}{t}.
\end{equation}
Let us express (\ref{sd1}) in different terms. If one carries out the computation in coordinates one arrives at the following, analogous to the classical situation,
\begin{align} \label{res3}
\begin{split}
\langle h, \delta(s) \rangle & = \frac{1}{2 \pi i}\int_{X/S}h \frac{\bar{D}(\rho s)}{t} \\
& = \sum_{k=1}^r \text{res}_{q_k} \left ( \frac{hs}{t} \right )
\end{split}
\end{align}
for $h \in \pi_*\stsh$. We remark that the meaning of taking the residue at $q_k$ of the expression $hs/t$ means to take the local function defining $s$ near $q_k$ and compute the resulting residue. This gives an alternative description of the functional $\langle \cdot, \delta(s) \rangle = \sum_k \text{res}_{q_k} (-)s/t$.

Now, in view of the exact sequence (\ref{ses1}) we see that the image of the map $\pi_*(\stsh|_{T}) \to \Pi ( \pi_*(\stsh))^*$ sending $s \mapsto \langle \cdot, \delta(s) \rangle$ is the kernel of $t^*: \Pi(\pi_*(\oo))^* \to(\pi_*(\w))^*$ which is $\text{span}\, \Pi \{1^* \, | \, \xi_1^*, \dots, \xi_{r-1}^*\}$. Thus $\langle \cdot, \delta(s) \rangle$ expands as
\begin{equation} \label{exp1}
\langle \cdot, \delta(s) \rangle = \langle 1, \delta(s) \rangle 1^* + \sum_{k=1}^{r-1}\langle \xi_k, \delta(s) \rangle \xi_k^*.
\end{equation}

On the other hand, the kernel of $t^*$ is spanned by the set
$$
\{\langle \cdot, \delta(1_1) \rangle , \dots, \langle \cdot, \delta(1_r) \rangle \, | \, \langle \cdot, \delta(\tth_1) \rangle , \dots, \langle \cdot, \delta(\tth_r) \rangle \},
$$
and hence equation (\ref{exp1}) applied to each member of the (ordered) set above yield the various expressions
\begin{align*}
\langle 1, \delta(1_k) \rangle = \text{res}_{q_k} \left ( \frac{1}{t} \right ), \hspace{.5cm} \langle 1, \delta(\tth_k) \rangle = \text{res}_{q_k} \left ( \frac{\tth_k}{t} \right ),
\end{align*}
\begin{align*}
\langle \xi_j, \delta(1_k) \rangle = \text{res}_{q_k} \left ( \frac{\xi_j}{t} \right ), \hspace{.5cm} \langle \xi_j, \delta(\tth_k) \rangle = \text{res}_{q_k} \left ( \frac{\xi_j \tth_k}{t} \right ),
\end{align*}
which in view of the local expansions of Section \ref{variousbases} and the computation (\ref{res2}), can be encoded in the $(2r \times r)$ matrix
$$ 
A' = 
\begin{bmatrix}
  \text{res}_{q_1} 1/t & \text{res}_{q_1} \xi_1/t  & \dots & \text{res}_{q_1} \xi_{r-1}/t \\
  \vdots & \vdots & &  \vdots \\
  \text{res}_{q_r} 1/t & \text{res}_{q_r} \xi_1/t & \dots & \text{res}_{q_r} \xi_{r-1}/t \\
  \text{res}_{q_1} \tth_1/t & \text{res}_{q_1} \xi_1\tth_1/t  & \dots & \text{res}_{q_1} \xi_{r-1}\tth_1/t \\
  \vdots & \vdots & &  \vdots \\
  \text{res}_{q_r} \tth_r/t & \text{res}_{q_r} \xi_1\tth_r/t & \dots & \text{res}_{q_r} \xi_{r-1}\tth_r/t \\
\end{bmatrix}.
$$

Letting $A = (a_{ij})$ denote any $(r \times 2r)$ left inverse of $A'$ gives us lifts to $\pi_*(\stsh|_{T})$ of the elements $\{ 1^* \, | \, \xi_1^*, \dots, \xi_{r-1}^* \}$,
\begin{equation} \label{lift1}
\begin{split}
\widetilde{1^*} = & \sum_{k=1}^r a_{1,j}1_k + \sum_{k=1}^r a_{1,j+r}\tth_k1_k, \\
\widetilde{\xi_{j-1}^*} = & \sum_{k=1}^r a_{j,k}1_k + \sum_{k=1}^r a_{j,k+r}\tth_k1_k.
\end{split}
\end{equation}
Therefore combining (\ref{res1}) and (\ref{lift1}) we have that the bases 
$$
\{1|_T, \widetilde{\xi^*_1},\dots,\widetilde{\xi^*_{r-1}} \, | \, \xi_1|_T, \dots, \xi_{r-1}|_T, \widetilde{1^*} \}
$$
and
$$ \{ 1_1,\dots, 1_r \, | \, \tth_1, \dots, \tth_r \} $$
of $\pi_*\stsh|_T$ are related by the matrix 
\begin{equation} \label{m0_r}
M_0=
\begin{bmatrix}
    1       & a_{2,1} & \dots &  a_{r,1} & \xi_1^{1,-} & \dots & \xi_{r-1}^{1,-} & a_{1,1}\\
    1       & a_{2,2} & \dots & a_{r,2} & \xi_1^{2,-} &  \dots & \xi_{r-1}^{2,-} & a_{1,2} \\
    1       & a_{2,3} & \dots & a_{r,3} & \xi_1^{3,-} &  \dots & \xi_{r-1}^{3,-} & a_{1,3} \\
    \vdots & \vdots && \vdots & \vdots & & \vdots & \vdots \\
    1       & a_{2,r} & \dots & a_{r,r} & \xi_1^{r,-} &  \dots & \xi_{r-1}^{r,-} & a_{1,r} \\
    0 & a_{2,r+1} & \dots & a_{r,r+1} & \xi_1^{1,+} &  \dots & \xi_{r-1}^{1,+} & a_{1,r+1} \\
    0 & a_{2,r+2} & \dots & a_{r,r+2} & \xi_1^{2,+} &  \dots & \xi_{r-1}^{2,+} & a_{1,r+2} \\
    \vdots & \vdots & \dots & \vdots & \vdots & \dots & \vdots & \vdots \\
	0 & a_{2,2r-1} & \dots & a_{r,2r-1} & \xi_{1}^{r-1,+} &  \dots & \xi_{r-1}^{r-1,+} & a_{1,2r-1} \\
    0 & a_{2,2r} & \dots &  a_{r,2r} & \xi_1^{r,+} & \dots & \xi_{r-1}^{r,+} & a_{1,2r} &
\end{bmatrix}.
\end{equation}
That is, we get
$$ \ber \{ 1|_T, \tilde{\xi_k^*} \, | \, \xi_k|_T, \tilde{1^*} \} = \ber M_0 \, \delta_{0} $$
in $B(\stsh|_T) = \ber \pi_*(\stsh|_T)$. Therefore under the canonical isomorphism
$$ B(\oo|_T) \cong \lambda_{1/2} \lambda_{0} $$
we have the identification
\begin{equation}
\ber \{ 1|_T, \tilde{\xi_k^*} \, | \, \xi_k|_T, \tilde{1^*} \} = \ber M_0 \, \delta_{0} = d_{1/2}d_0. 
\end{equation}

\bigskip

\subsection{The Second Short Exact Sequence}~ \label{secondSESsec}

We move on to analyze the second short exact sequence \eqref{secondSES}. Our specified bases for the sheaves listed here are compatible with this short exact sequence, except for the third term. The induced long exact sequence after Serre duality reads

$$ 0 \longrightarrow \pi_*\oo \overset{t}{\longrightarrow} \pi_*(\Pi \w^{-1}) \longrightarrow \pi_*(\Pi \w^{-1}|_{T}) \longrightarrow (\pi_*\w)^* \longrightarrow 0. $$

We aim to replicate the argument given in Section \ref{firstSESsec} to relate the two bases we have for $\pi_*(\Pi \w^{-1}|_{T})$. The key again is understanding the connecting homomorphism $\delta: \pi_*(\Pi \w^{-1}|_T) \to R^1\pi_*\stsh$ and its composition with the Serre duality map $\pi_*(\Pi \w^{-1}|_T) \overset{\delta}{\to} R^1\pi_*\stsh \to (\pi_*\w)^*$. Here we get an exact diagram analogous to (\ref{diag1}),

\begin{equation} \label{diag2}
\begin{tikzcd}
 & 0 \arrow[d] & 0 \arrow[d]  & 0 \arrow[d] &  \\
0 \arrow[r] & \stsh \arrow[d] \arrow[r, "t"] & \Pi \w^{-1} \arrow[d] \arrow[r] & \Pi \w^{-1}|_{T} \arrow[d] \arrow[r] & 0 \\
0 \arrow[r] & \stsh \otimes \mathcal{E}  \arrow[r, "t"] \arrow[d, "\bar{D}"] & \Pi \w^{-1} \otimes \mathcal{E} \arrow[d, "\bar{D}"] \arrow[r] & \Pi \w^{-1}|_{T} \otimes \mathcal{E} \arrow[d, "\bar{D}"] \arrow[r] & 0 \\
0 \arrow[r] & \stsh \otimes \w^{0,1} \arrow[r, "t"] \arrow[d] & \Pi \w^{-1} \otimes \w^{0,1} \arrow[r] \arrow[d] & \Pi \w^{-1}|_{T} \otimes \w^{0,1} \arrow[r] \arrow[d] & 0 \\
& 0 & 0  & 0  &  \\
\end{tikzcd}.
\end{equation}

Following the same argument as given in Section \ref{firstSESsec}, we can describe the connecting homomorphism $\delta$ as the map
\begin{align}
\begin{split}
\delta:  \pi_*(\Pi &\w^{-1}|_T) \to R^1\pi_*\oo \\
 & s \mapsto \left [ \frac{\bar{D}(\rho s)}{t} \right ]
 \end{split}
\end{align}
where again $\rho$ is a real valued global smooth bump function on $X$ that is identically equal to one in a neighborhood of each $q_k$. This map composed with the Serre dual is identical to (\ref{res3}),

\begin{equation} \label{sd2}
\langle \cdot, \delta(s) \rangle = \frac{1}{2 \pi i}\int_{X/S}(-) \frac{\bar{D}(\rho s)}{t} = \sum_{k=1}^r \text{res}_{q_k} \left( (-) \frac{s}{t} \right ) .
\end{equation}

In terms of the basis $\mathcal{B}_{\w}^*$, each $\langle \cdot, \delta(s) \rangle $ expands as
\begin{equation} \label{exp2}
\langle \cdot, \delta(s) \rangle = \sum_{j=1}^g \langle \varphi_j, \delta(s) \rangle \varphi^*_j.
\end{equation}
Thus, we apply (\ref{exp2}) to each member of the basis $\mathcal{B}_{\w^{-1}|_T}$ and encode it in a $(2r \times g)$ matrix $B'$. To simplify notation let $\varpi_k = \bpk $,

$$ 
B'  = 
\begin{bmatrix}
    \text{res}_{q_1} \varphi_1 \varpi_1/t & \dots & \text{res}_{q_1} \varphi_g \varpi_1/t \\
    \vdots & & \vdots \\
    \text{res}_{q_r} \varphi_1 \varpi_r/t & \dots & \text{res}_{q_r} \varphi_g \varpi_r/t \\
    \text{res}_{q_1} \varphi_1 \tth_1\varpi_1/t & \dots & \text{res}_{q_1} \varphi_g \tth_1\varpi_1/t \\
    \vdots & & \vdots \\
    \text{res}_{q_r} \varphi_1 \tth_r\varpi_r/t & \dots & \text{res}_{q_r} \varphi_g \tth_r\varpi_r/t \\
\end{bmatrix}.
$$
Hence we can invert (non-uniquely) the relationships (\ref{exp2}) encoded in $B'$ by finding any left inverse to $B'$, call it $B = (b_{ij})$ which yields lifts to $\pi_*(\Pi \w^{-1}|_T)$ of the elements $\{ \varphi_1^*, \dots, \varphi_g^* \}$
\begin{equation}
\widetilde{\varphi_k^*} = \sum_{j=1}^r b_{k,j}\varpi_j + \sum_{j=1}^r b_{k,r+j}\tth_j \varpi_j.
\end{equation}

Now in $\pi_*(\Pi \w^{-1}|_T)$ the two bases $\Pi \mathcal{B}_{\w^{-1}|_T}$ and 
$$
\{ \tau_1|_T, \dots, \tau_{r-g}|_T, \widetilde{\varphi_1^*}, \dots, \widetilde{\varphi_g^*} \, | \, \sigma_1|_T, \dots, \sigma_r|_T \},
$$
are related by the matrix $M_{-1/2}$,
\begin{equation} \label{m-1/2_r}
M_{-1/2} = 
\begin{bmatrix}
    \tau_1^{1,-} & \dots & \tau_1^{r,-} & \tau_1^{1,+} & \dots & \tau_1^{r,+} \\
    \vdots && \vdots &\vdots && \vdots \\
    \tau_r^{1,-} & \dots & \tau_r^{r,-} & \tau_r^{1,+} & \dots & \tau_r^{r,+} \\
    b_{1,r+1} & \dots & b_{1,2r} & b_{1,1} & \dots & b_{1,r} \\
    \vdots && \vdots &\vdots && \vdots \\
    b_{g,r+1} & \dots & b_{g,2r} & b_{g,1} & \dots & b_{g,r} \\
    \sigma_1^{1,+} & \dots & \sigma_1^{r,+} & \sigma_1^{1,-} & \dots & \sigma_1^{r,-} \\
    \vdots && \vdots &\vdots && \vdots \\
    \sigma_r^{1,+} & \dots & \sigma_r^{r,+} & \sigma_r^{1,-} & \dots & \sigma_r^{r,-} \\
\end{bmatrix}.
\end{equation}
Therefore the relationship
$$ \ber \{ \tau_1|_T, \dots, \tau_{r-g}|_T, \widetilde{\varphi_1^*}, \dots, \widetilde{\varphi_g^*} \, | \, \sigma_1|_T, \dots, \sigma_r|_T \} = \ber M_{-1/2} \, \delta_{-1/2} $$
along with the canonical isomorphism $\lambda_0 \otimes B(\Pi \w^{-1}|_T) \cong \lambda^{-1}_{-1/2} $, gives 
\begin{equation}
d^{-1}_0 d^{-1}_{-1/2} = \ber M_{-1/2} \delta_{-1/2}.
\end{equation}

\bigskip

\subsection{The Third Short Exact Sequence}~ \label{thirdSESsec}

We analyze the final short exact sequence \eqref{thirdSES}. In this case as $R^1\pi_*(\w^{-1}) = R^1\pi_*(\w^{-2}) = 0$, the induced long exact sequence is actually the short exact sequence

$$ 0 \longrightarrow \pi_*(\Pi \w^{-1}) \overset{t}{\longrightarrow} \pi_*\w^{-2} \longrightarrow \pi_*(\w^{-2}|_{T}) \longrightarrow  0. $$
This allows us to quickly identify a basis of $\pi_*(\w^{-2}|_{T})$ coming from the chosen basis $\mathcal{B}_{\w^{-2}}$, namely $\{ \eta_1|_T , \dots, \eta_r|_T \, | \, \psi_1|_T, \dots, \psi_r|_T \}$. This is related to the basis $\{\varpi_k^2 \, | \, \tth_k\varpi_k^2 \}$ by the matrix
\begin{equation} \label{m-1_r}
M_{-1} = 
\begin{bmatrix}
    \eta_1^{1,+} & \dots & \eta_1^{r,+} & \eta_1^{1,-}  & \dots &  \eta_1^{r,-} \\
    \vdots && \vdots & \vdots && \vdots \\
    \eta_r^{1,+} & \dots & \eta_r^{r,+} & \eta_r^{1,-}  & \dots &  \eta_r^{r,-} \\
    \psi_1^{1,-} & \dots & \psi_1^{r,-} & \psi_1^{1,+}  & \dots &  \psi_1^{r,+} \\
    \vdots && \vdots & \vdots && \vdots \\
    \psi_r^{1,-} & \dots & \psi_r^{r,-} & \psi_r^{1,+}  & \dots &  \psi_r^{r,+} \\
\end{bmatrix}.
\end{equation}
Therefore in $B(\w^{-2}|_T)$ we have
$$ \ber \{ \eta_1|_T , \dots, \eta_r|_T \, | \, \psi_1|_T, \dots, \psi_r|_T \} = \ber M_{-1} \, \delta_{-1}$$
and under the identification $\lambda_{-1/2}^{-1} \otimes B(\w^{-2}|_T) \cong \lambda_{-1}$, we get
\begin{equation}
d_{-1}d_{-1/2} = \ber M_{-1} \delta_{-1}. 
\end{equation}

\bigskip

\subsection{An Expression for $\mu$}~

The calculations done in Sections \ref{firstSESsec}, \ref{secondSESsec} and \ref{thirdSESsec} gave
\begin{equation*}
    \begin{split}
        d_{1/2}d_0 & = \ber M_0 \, \delta_{0}, \\
        d^{-1}_0 d^{-1}_{-1/2} & = \ber M_{-1/2} \, \delta_{-1/2}, \\
        d_{-1}d_{-1/2} & = \ber M_{-1} \, \delta_{-1}.
    \end{split}
\end{equation*} 

Serre duality yields $d_0 = d_{1/2}$, and by the argument in \cite{vormum} and \cite{vids} one identifies for each $j$,  $\delta_{j/2} = \delta_0$. These facts give by elementary algebra

$$
d_{-1} = \frac{\ber M_{-1}  \,\ber M_{-1/2}}{(\ber M_0)^2} \, d_{1/2}^{5}.
$$
Thus, we obtain an explicit expression for the trivializing section $\mu$ as follows:
\begin{thm} \label{mainthm1}
Suppose $\pi: X \to S$ is a family of super Riemann surfaces of genus $g \geq 2$ with $\nr$ Ramond punctures such that:
\begin{enumerate}
    \item $\nr > 6g-6$.
    \item The sheaves $R^i\pi_*\w^j$ are locally free for $i=0,1$, $j=-2, -1, 0, 1$.
\end{enumerate}
Then the super Mumford form $\mu$ may be expressed via the sections chosen in (\ref{gens_1}) as
$$ \mu = d_{-1}d_{1/2}^{-5} \frac{(\ber M_0)^2}{\ber M_{-1}  \,\ber M_{-1/2}} \in \lambda_{-1} \lambda_{1/2}^{-5} \cong \oo_S, $$
where $M_0, M_{-1/2}$ and $M_{-1}$ are given by (\ref{m0_r}), (\ref{m-1/2_r}) and (\ref{m-1_r}) respectively.
\end{thm}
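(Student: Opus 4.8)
The plan is to assemble the three identities established in Sections \ref{firstSESsec}, \ref{secondSESsec} and \ref{thirdSESsec} and reduce them, by elementary manipulations of invertible sheaves on $S$, to a single trivializing expression for $\lambda_{-1}\lambda_{1/2}^{-5}$. First I would record the three relations
\[
d_{1/2}d_0 = \ber M_0 \, \delta_0, \qquad d_0^{-1}d_{-1/2}^{-1} = \ber M_{-1/2}\,\delta_{-1/2}, \qquad d_{-1}d_{-1/2} = \ber M_{-1}\,\delta_{-1},
\]
which arise from feeding the long exact cohomology sequences of \eqref{firstSES}, \eqref{secondSES}, \eqref{thirdSES} through the canonical isomorphisms $B(\F')\otimes B(\F'')\cong B(\F)$ for the Berezinian of cohomology, together with Serre duality. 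The two facts that collapse these into one statement are: (i) the Serre duality identification $\lambda_0\cong\lambda_{1/2}$ is realized on the chosen bases as $d_0 = d_{1/2}$, since $\mathcal{B}_{\stsh}^*$ and $\mathcal{B}_{\w}^*$ are by construction the dual bases; and (ii) Lemma \ref{canisodiv} gives $B(\w^j|_T)\cong B(\stsh|_T)$ canonically, and because the bases $\mathcal{B}_{\w^j|_T}$ are compatible with these identifications one obtains $\delta_{j/2}=\delta_0$ for every $j$.

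Next I would substitute (i) and (ii) into the three relations and solve for $d_{-1}$ in terms of $d_{1/2}$. The second relation gives $d_{-1/2}^{-1} = d_{1/2}\,\ber M_{-1/2}\,\delta_0$; inserting this into the third gives $d_{-1} = \ber M_{-1}\,\ber M_{-1/2}\,d_{1/2}\,\delta_0^2$; and the first relation yields $\delta_0 = d_{1/2}^{2}(\ber M_0)^{-1}$, so that $\delta_0^2 = d_{1/2}^{4}(\ber M_0)^{-2}$ and hence
\[
d_{-1} = \frac{\ber M_{-1}\,\ber M_{-1/2}}{(\ber M_0)^2}\,d_{1/2}^{5}.
\]
Since $\lambda_{3/2}\cong\lambda_{-1}$ by Serre duality, $\mu$ is by definition the section of $\lambda_{-1}\lambda_{1/2}^{-5}$ corresponding to $1_S$ under the resulting trivialization $\oo_S\cong\lambda_{-1}\lambda_{1/2}^{-5}$. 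Rewriting the boxed identity as $1_S = \tfrac{(\ber M_0)^2}{\ber M_{-1}\,\ber M_{-1/2}}\,d_{-1}d_{1/2}^{-5}$ then reads off the claimed formula.

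The genuine content---the connecting-map computations and the matrices $M_0, M_{-1/2}, M_{-1}$---is already in hand, so the only remaining work is bookkeeping. In the super setting $\ber$ is only $\mathbb{Z}/2$-graded multiplicative, so the step I would check most carefully is that the parities of the basis elements (even versus odd generators in each $\mathcal{B}$) are tracked consistently when the $\ber M$'s are multiplied out and when factors are moved across the tensor identifications, lest a stray sign be dropped in passing from the expression for $d_{-1}$ to that for $\mu$. The identification $\delta_{j/2}=\delta_0$ likewise deserves explicit justification, since it is precisely where the divisor lemma forces the three left-hand sides to be literally equal; I expect this to be the only point requiring more than routine algebra.
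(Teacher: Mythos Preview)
Your proposal is correct and follows essentially the same route as the paper: collect the three identities from Sections~\ref{firstSESsec}--\ref{thirdSESsec}, invoke $d_0=d_{1/2}$ (Serre duality) and $\delta_{j/2}=\delta_0$ (Lemma~\ref{canisodiv}), then solve by elementary algebra for $d_{-1}$ in terms of $d_{1/2}^5$. Your cautions about parity bookkeeping and the justification of $\delta_{j/2}=\delta_0$ are well placed, but these are exactly the points the paper defers to \cite{vormum} and \cite{vids}.
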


\bigskip

\subsection{A Measure on $\msp$}~ \label{measure_mspr}

Here we follow an idea of E. Witten in \cite{witmeasure}. Thus far we have an explicit formula for the super Mumford form $\mu$, trivializing the line bundle $\lambda_{-1} \lambda_{1/2}^{-5}$ on the moduli space $\msp$. The significance of such a section is that the line bundle $\lambda_{-1}\lambda_{1/2}^{-5}$ is related to the Berezinian of $\msp$.

As discussed in Section \ref{ramondpunctures}, the tangent sheaf to $\msp$ is $R^1\pi_*\mathcal{W}$ where $\mathcal{W}$ is the sheaf of infinitesimal automorphisms, which is seen to be isomorphic (as sheaves of $\mathbb{C}$-vector spaces) to $\mathcal{D}^2$. Hence, by Serre duality and the isomorphism $\w^{-2}(-2\mathcal{F}) = \mathcal{D}^2$ one sees that
\begin{equation} \nonumber
\ber \msp = \ber \Omega^1_{\msp} = \ber \pi_*(\w^3(2\mathcal{F})).
\end{equation}
Noting that $R^1\pi_*\w^3(2\mathcal{F}) = 0$, we will write this as
\begin{equation}
\ber \msp = B(\w^3(2\mathcal{F})).
\end{equation}

Trivially we have the short exact sequence
$$
0 \longrightarrow \w^3 \longrightarrow \w^3(2\mathcal{F}) \longrightarrow \w^3(2\mathcal{F})/\w^3 \longrightarrow 0.
$$
By Corollary \ref{trivL} of the appendix we have canonically the identification
\begin{equation}
B(\w^3(2\mathcal{F})) = B(\w^3) \cong \lambda_{3/2}.
\end{equation}

Now Serre duality identifies $\lambda_{3/2}$ with $\lambda_{-1}$, and thus the super Mumford form $\mu$ can in fact be thought of as a section of $\ber \msp$ valued in a certain line bundle
$$
\mu \in \ber \msp \otimes \lambda_{1/2}^{-5} \cong \stsh_{\msp}.
$$

In bosonic string theory (without punctures), the analogous argument would yield a form similar to $\mu$ such that its modulus squared could genuinely be regarded (in the sense that one had a natural pairing between the analogous factor $\lambda_{1/2}^{-5}$ and its conjugate) as a section of the smooth Berezinian (or simply the determinant in this case) of the moduli space of Riemann surfaces. The celebrated result of Belavin and Knizhnik \cite{belkniz} states that this procedure indeed yields the integrand of the bosonic string partition function, the so-called Polyakov measure.

In superstring theory the super Mumford form $\mu$ plays a similar role to its bosonic counterpart, in that it can be paired with something analogous to its complex conjugate to yield a genuine measure. However, the story is a bit more complicated. The interested reader can learn more in E. Witten's notes \cite{wit4}.

\section{The Neveu-Schwarz Puncture Case}

Suppose now we have a family $\pi: X \to S$ of SUSY curves of genus $g \geq 2$ with $n_{NS}$ Neveu-Schwarz punctures. We will reproduce the arguments of \cite{vormum} and \cite{RSVphy} to write down the explicit formula for the associated super Mumford form $\mu$. We then discuss how this form can be used to create a genuine measure on $\mathfrak{M}_{g;n_{NS}}$.

As in the Ramond case, we make local freeness assumptions on the higher direct images $R^i\pi_*\w^j$ and describe $\mu$ in terms of chosen local bases for these sheaves. Here specifically we work with $R^i\pi_*\w^j$ for $i=0,1, \, j=0, 1, 2, 3$. We then relate this to a section of $\ber \mathfrak{M}_{g;n_{NS}}$. As the first part of the following argument is identical to the one found in \cite{vormum} and \cite{RSVphy}, we quickly review the procedure to establish notation but omit some details.

We also assume that we are working over the component of $\mathfrak{M}_{g;n_{NS}}$ corresponding to an odd spin structure. That is, for the relative Berezinian sheaf $\w$ we assume that $\pi_*\w$ has rank $g | 1$. Thus on each fiber the reduction $\Pi \w_{\tred}$ gives an odd nondegenerate theta characteristic.

We choose an odd global section $\nu' \in \w$ and consider the short exact sequence

$$ 0 \longrightarrow \stsh_X \overset{\nu}{\longrightarrow} \Pi\w \longrightarrow (\Pi\w)|_{D} \longrightarrow 0 $$
where $\nu = \Pi \nu'$ and $D = \{ \nu = 0 \} = \{ \nu' = 0\}$. This short exact sequence and the two others obtained by twisting by $\w$ and $\w^2$ is what we focus on. Similar to the argument given in the Ramond case, using these three short exact sequences we can produce the super Mumford isomorphism $\lambda_{3/2} \cong \lambda_{1/2}^5$.

As $\nu'$ is an odd global section of $\w$ its divisor $D$ has the property that its reduction $D_{\tred}$ is a finite sum of $g-1$ points (which we assume to be distinct),
$$
D_{\tred} = \sum_{j=1}^{g-1} p_j.
$$
For each $j$ we choose local superconformal coordinates $z_j \, | \, \zeta_j$ centered at $p_j$.

\bigskip

\subsection{Bases}~

We choose specific local bases here of the sheaves $R^i\pi_*\w^j$ and $\pi_*\w|_D$ and analyze their compatibility to the above mentioned short exact sequences. The ranks of these various sheaves are easily computable using the same techniques as used in Section \ref{somerierochcalc} of the Ramond case.

\bigskip

\subsection*{Basis in $\pi_*\stsh_X$:}~

The rank of $\pi_*\stsh_X$ is $1 | 1$ and thus we take the local basis 
$$\mathcal{B}_{\stsh_X} := \{ 1 \, | \, \xi \},$$
where $\xi$ expands near each $p_k$ as
$$
\xi \sim ( \xi^{k, -} + \xi^{k, +}\zeta_k + O(z_j) )
$$
where $\xi^{k,\pm}$ are some even/odd functions from the base $S$.

\subsection*{Basis in $\pi_*\w$:}~

The rank of $\pi_*\w$ is $g  | 1$ and we take the local basis 
$$
\mathcal{B}_{\w} := \{ \varphi_1, \cdots, \varphi_{g-1}, \nu' \xi \, | \, \nu' \},
$$
expanding each $\varphi_j$ near $p_k$ as
$$
\varphi_j \sim ( \varphi_j^{k, +} + \varphi_j^{k, -}\zeta_k + O(z_j) ) [dz_j \, | \, d\zeta_j].
$$

\subsection*{Basis in $\pi_*\w^2$:}~

The rank of $\pi_*\w^2$ is $g  | 2g-2$. We take the local basis 
$$
\mathcal{B}_{\w^2} := \{ \nu'^2, \chi_1, \dots, \chi_{g-1} \, | \, \nu' \varphi_1, \cdots, \nu'\varphi_{g-1}, \nu'^2 \xi, \psi_1, \cdots, \psi_{g-2} \},
$$
expanding each $\chi_j$ and $\psi_j$ near $p_k$ as
$$
\chi_j \sim ( \chi_j^{k, +} + \chi_j^{k, -}\zeta_k + O(z_j) ) [dz_j \, | \, d\zeta_j]^2,
$$
$$
\psi_j \sim ( \psi_j^{k, -} + \psi_j^{k, +}\zeta_k + O(z_j) ) [dz_j \, | \, d\zeta_j]^2.
$$

\subsection*{Basis in $\pi_*\w^3$:}~

The rank of $\pi_*\w^3$ is $3g-3  | 2g-2 $. We take the local basis 
\begin{align} \nonumber
\mathcal{B}_{\w^3} := \{ \nu'^2 \varphi_1, \cdots, \nu'^2\varphi_{g-1}, \nu'^3\xi, &\nu'\psi_1, \cdots, \nu'\psi_{g-1}, \sigma_1, \cdots, \sigma_{g-1} \, \\ & | \, \nu'^3, \nu' \chi_1, \cdots, \nu' \chi_{g-1}, \rho_1, \cdots, \rho_{g-2}  \}, \nonumber
\end{align}
expanding each $\sigma_j$ and $\rho_j$ near $p_k$ as
$$
\sigma_j \sim ( \sigma_j^{k, +} + \sigma_j^{k, -}\zeta_k + O(z_j) ) [dz_j \, | \, d\zeta_j]^3,
$$
$$
\rho_j \sim ( \rho_j^{k, -} + \rho_j^{k, +}\zeta_k + O(z_j) ) [dz_j \, | \, d\zeta_j]^3.
$$

\subsection*{Basis in $\pi_*(\w^{-1})$:}~

The rank of $\pi_*(\w^{-1})$ is $1  |  0$ and we take the local basis 
$$
\mathcal{B}_{\w^{-1}} := \{ \xi / \nu'\}.
$$

\subsection*{Basis in $\pi_*\w^j|_{D}$:}~

With the aid of the specific chosen local coordinates $z_j \, | \, \zeta_j$ we take the local basis (for $j \geq 0$)
$$
\mathcal{B}_{\w^j|_D} := \{ [dz_1|d\zeta_1]^j, \cdots, [dz_{g-1}|d\zeta_{g-1}]^j \, | \, \zeta_1[dz_1|d\zeta_1]^j, \cdots, \zeta_{g-1}[dz_{g-1}|d\zeta_{g-1}]^j \}.
$$

Finally we utilize Serre duality, the canonical isomorphisms $R^1\pi_*\w^j \cong (\pi_*\w^{1-j})^*$, to construct local bases for $R^1\pi_*\stsh_X$, $R^1\pi_*\w$, $R^1\pi_*\w^2$, and $R^1\pi_*\w^3$ by taking the image of the corresponding dual bases of those already specified. We denote these bases by $\mathcal{B}_{\w^j}^*$.

We set
$$
\lambda_{j/2} := B(\w^j).
$$
Using these bases, we consider the following local generators of the various $\lambda_{j/2}$,
\begin{equation} \label{gens_2}
\begin{split}
d_0 & := \ber \mathcal{B}_{\stsh_X} \otimes \ber \mathcal{B}^*_{\w} \in \lambda_0 \\
d_{1/2} & : = d_0 \in \lambda_{1/2} \\
d_1 & := \ber \mathcal{B}_{\w^2} \otimes \ber \mathcal{B}^*_{\w^{-1}} \in \lambda_1 \\
d_{3/2} & := \ber \mathcal{B}_{\w^3} \in \lambda_{3/2} \\
\delta_{j/2} & := \ber \mathcal{B}_{\w^j|_D} \in B(\w^j|_D). \\
\end{split}
\end{equation}

\bigskip

\subsection{Relating the Chosen Bases}~ \label{relate_bases_ns}

The first short exact sequence is
$$ 0 \longrightarrow \stsh_X \overset{\nu}{\longrightarrow} \Pi\w \longrightarrow (\Pi\w)|_{D} \longrightarrow 0 $$
which gives the long exact sequence on cohomology (after using Serre duality)
$$
0 \longrightarrow \pi_*\stsh_X \overset{\nu}{\longrightarrow} \Pi\pi_*\w \longrightarrow \Pi\pi_*(\w)|_{D} \longrightarrow (\pi_*\w)^* \longrightarrow \Pi (\pi_*\stsh_X)^* \longrightarrow 0. 
$$

Following the work of \cite{vormum} or \cite{RSVphy} we conclude under the canonical isomorphism $\lambda_{1/2}^{-1} \cong \lambda_0 \otimes B(\w|_{D})^{-1}$ that
\begin{equation} \label{id1}
d_{1/2}^{-1} = \ber M_1 \, d_0 \delta_{1/2}^{-1}
\end{equation}
where $M_1$ is the block matrix
\begin{equation} \label{m1_ns}
M_1 = 
\begin{pmatrix}
   A_1 \\
   B^t_1
\end{pmatrix}
\end{equation}
where $A_1$ is the $(g-1) \times (2g-2)$ matrix
$$
A_1 = 
\left (
\begin{array}{ccc | ccc}
\varphi_1^{1,+} & \dots & \varphi_1^{g-1,+} & \varphi_1^{1,-} & \dots & \varphi_1^{g-1,-} \\
\vdots & & \vdots & \vdots & & \vdots \\
\varphi_{g-1}^{1,+} & \dots & \varphi_{g-1}^{g-1,+} & \varphi_{g-1}^{1,-} & \dots & \varphi_{g-1}^{g-1,-}
\end{array}
\right )
$$
and $B_1$ is any left inverse of $A_1$.

Similarly, the exact sequence
$$
0 \longrightarrow \Pi\w \overset{\nu}{\longrightarrow} \w^2 \longrightarrow (\w^2)|_{D} \longrightarrow 0,
$$
yields
$$
0 \longrightarrow \Pi \pi_*\w \overset{\nu}{\longrightarrow} \pi_*\w^2 \longrightarrow \pi_*(\w^2)|_{D} \longrightarrow \Pi (\pi_*\stsh_X)^* \longrightarrow (\pi_*\w^{-1})^* \longrightarrow 0.
$$
Arguing again as in \cite{vormum} and \cite{RSVphy} we conclude
\begin{equation} \label{id2}
    d_1 = \ber M_2 \, d_{1/2}^{-1} \delta_{1},
\end{equation}
where $M_2$ is the $(2g-2) \times (2g-2)$ square matrix

\begin{equation} \label{m2_ns}
M_2 = 
\left (
\begin{array}{ccc | ccc}
\chi_1^{1,+} & \dots & \chi_1^{g-1,+} & \chi_1^{1,-} & \dots & \chi_1^{g-1,-} \\
\vdots & & \vdots & \vdots & & \vdots \\
\chi_{g-1}^{1,+} & \dots & \chi_{g-1}^{g-1,+} & \chi_{g-1}^{1,-} & \dots & \chi_{g-1}^{g-1,-} \\
\phantom{asdf} \\
\hline \\
\psi_{1}^{1,-} & \dots & \psi_{1}^{g-1,-} & \psi_{1}^{1,+} & \dots & \psi_{1}^{g-1,+} \\
\vdots & & \vdots & \vdots & & \vdots \\
\psi_{g-2}^{1,-} & \dots & \psi_{g-2}^{g-1,-} & \psi_{g-2}^{1,+} & \dots & \psi_{g-2}^{g-1,+} \\
0 & \dots & 0 & 0 & \dots & 1
\end{array}
\right ).
\end{equation}
\medskip

Lastly, the final short exact sequence
$$
0 \longrightarrow \w^2 \overset{\nu}{\longrightarrow} \Pi\w^3 \longrightarrow (\Pi\w^3)|_{D} \longrightarrow 0
$$
gives
$$
0 \longrightarrow \pi_*\w^2 \overset{\nu}{\longrightarrow} \Pi\pi_*\w^3 \longrightarrow \Pi\pi_*(\w^3)|_{D} \longrightarrow (\pi_*\w^{-1})^* \longrightarrow 0.
$$
Thus, under $\lambda_{3/2}^{-1} \cong \lambda_1 B(\w^3|_D)^{-1}$ we conclude
\begin{equation} \label{id3}
    d_{3/2}^{-1} = \ber M_3 \, d_1 \delta_{3/2}^{-1},
\end{equation}
where $M_3$ is the $(2g-2) \times (2g-2)$ square matrix

\begin{equation} \label{m3_ns}
M_3 = 
\left (
\begin{array}{ccc | ccc}
\rho_1^{1,+} & \dots & \rho_1^{g-1,+} & \rho_1^{1,-} & \dots & \rho_1^{g-1,-} \\
\vdots & & \vdots & \vdots & & \vdots \\
\rho_{g-2}^{1,+} & \dots & \rho_{g-2}^{g-1,+} & \rho_{g-2}^{1,-} & \dots & \rho_{g-2}^{g-1,-} \\
\xi_1^{-1} & \dots & 0 & 0 & \dots & 0 \\
\hline \\
\sigma_{1}^{1,-} & \dots & \sigma_{1}^{g-1,-} & \sigma_{1}^{1,+} & \dots & \sigma_{1}^{g-1,+} \\
\vdots & & \vdots & \vdots & & \vdots \\
\sigma_{g-1}^{1,-} & \dots & \sigma_{g-1}^{g-1,-} & \sigma_{g-1}^{1,+} & \dots & \sigma_{g-1}^{g-1,+}
\end{array}
\right ).
\end{equation}

\medskip 

Under the canonical isomorphism guaranteed by Lemma \ref{canisodiv} we have that 
$$
\delta_{j/2}^{(-1)^{j-1}}=\delta_{1/2}.
$$
Combining this with the identifications (\ref{id1}), (\ref{id2}) and (\ref{id3}), we get the desired formula for the super Mumford form $\mu$ as in \cite{vormum} and \cite{RSVphy}.
\begin{thm} \label{mainthm2} \emph{(}\cite{vormum}, \cite{RSVphy}\emph{)}
Suppose $\pi:X \to S$ is a family of super Riemann surfaces of genus $g \geq 2$ such that:
\begin{enumerate}
    \item The sheaves $R^i\pi_*\w^j$ are locally free for $i=0,1$, $j=0,1,2, 3$.
    \item $\pi_*\w$ has rank $g|1$.
\end{enumerate}
Then the super Mumford form $\mu$ may be expressed via the sections chosen in (\ref{gens_2}) as
$$
\mu = d_{3/2} d_{1/2}^{-5} \frac{\ber M_3 \, \ber M_2}{(\ber M_1)^2},
$$
where $M_1, M_2$ and $M_3$ are given by (\ref{m1_ns}), (\ref{m2_ns}) and (\ref{m3_ns}) respectively.
\end{thm}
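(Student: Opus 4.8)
The plan is to read the three identifications \eqref{id1}, \eqref{id2}, \eqref{id3} as a small linear system in the invertible-sheaf generators $d_0, d_{1/2}, d_1, d_{3/2}$ and $\delta_{1/2}, \delta_1, \delta_{3/2}$, and to eliminate every generator except $d_{3/2}$ and $d_{1/2}$, at the cost of accumulating the three matrix Berezinians $\ber M_1, \ber M_2, \ber M_3$. Since $\mu$ is by definition the image of $1_S$ under $\lambda_{3/2}\lambda_{1/2}^{-5} \cong \oo_S$, once $d_{3/2} d_{1/2}^{-5}$ is written as a matrix-Berezinian scalar times $1_S$, the formula for $\mu$ is obtained by inverting that scalar.

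First I would record the two structural inputs that collapse the auxiliary generators. From the definitions in \eqref{gens_2} together with Serre duality one has $d_0 = d_{1/2}$, which removes $d_0$ from \eqref{id1}. From Lemma \ref{canisodiv}, which canonically identifies $B(\w^j|_D)$ with $B(\oo|_D)$ independently of $j$, one gets $\delta_{j/2}^{(-1)^{j-1}} = \delta_{1/2}$; explicitly $\delta_1 = \delta_{1/2}^{-1}$ and $\delta_{3/2} = \delta_{1/2}$. With these in hand, \eqref{id1}--\eqref{id3} are rewritten entirely in terms of $d_{1/2}, d_1, d_{3/2}, \delta_{1/2}$ and the three matrix Berezinians, and all $\delta$'s are expressed through $\delta_{1/2}$ alone.

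The elimination then proceeds in one pass. I would solve \eqref{id1} for $\delta_{1/2}$, obtaining $\delta_{1/2} = \ber M_1 \, d_{1/2}^2$ after substituting $d_0 = d_{1/2}$; feed this and $\delta_1 = \delta_{1/2}^{-1}$ into \eqref{id2} to express $d_1 = (\ber M_2/\ber M_1)\, d_{1/2}^{-3}$; and finally substitute $d_1$ together with $\delta_{3/2} = \delta_{1/2}$ into \eqref{id3}. The system collapses to the single relation
$$ d_{3/2}^{-1} = \frac{\ber M_3 \, \ber M_2}{(\ber M_1)^2}\, d_{1/2}^{-5}, $$
whose inversion gives $d_{3/2} d_{1/2}^{-5} = (\ber M_1)^2(\ber M_3\,\ber M_2)^{-1}\, 1_S$, and hence the stated expression for $\mu$ upon identifying $1_S$ with the trivializing section.

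The genuine content of the theorem lies not in this final bookkeeping but in the three identifications it consumes, which were established in Section \ref{relate_bases_ns} by computing the connecting homomorphisms of the twisted sequences through $\bar D$-cohomology and pairing them against the chosen bases via Serre duality. The only points requiring care when assembling the proof are therefore parity and sign conventions: keeping the $(-1)^{j-1}$ exponents in the $\delta$ relations consistent with the direction of each canonical isomorphism ($\lambda_{1/2}^{-1} \cong \lambda_0 \otimes B(\w|_D)^{-1}$ and its twisted analogues), and confirming that the non-unique left-inverse choices defining $M_1$ and the distinguished rows of $M_2, M_3$ affect $\ber M_i$ only through an ambiguity already absorbed into the trivialization. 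I expect this parity accounting to be the sole subtlety, with no essential obstacle beyond it.
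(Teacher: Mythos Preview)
Your proposal is correct and follows essentially the same route as the paper: the paper simply states that combining \eqref{id1}--\eqref{id3} with the identifications $\delta_{j/2}^{(-1)^{j-1}}=\delta_{1/2}$ from Lemma \ref{canisodiv} yields the formula, and you have written out the elementary elimination explicitly. The only difference is that the paper builds $d_{1/2}=d_0$ directly into the definition \eqref{gens_2} rather than invoking it as a consequence of Serre duality, so no separate step is needed there.
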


\bigskip

\subsection{Relation to $\ber \mathfrak{M}_{g;n_{NS}}$}~

In the situation without any punctures, such a formula for $\mu$ is of immediate interest as the Berezinian of $\mathfrak{M}_g$ \emph{is} simply $\lambda_{3/2}$, hence $\mu$ is interpreted as a section of the Berezinian of supermoduli space valued in a certain line bundle. However, when one considers punctures the story is a bit different, as it is no longer true that $\lambda_{3/2}$ is $\ber \mathfrak{M}_{g;n_{NS}}$. In Section \ref{nspuctures_sec} we saw that $\mathcal{T}_{\mathfrak{M}_{g;n_{NS}}} \cong R^1\pi_*\mathcal{W} \cong R^1\pi_*\mathcal{D}^2(-N)$ for $N = \sum_{k=1}^{n_{NS}} \text{div}(s_k)$ the Neveu-Schwarz divisor. Serre duality then gives that instead we have $\ber \mathfrak{M}_{g;n_{NS}} \cong B(\w^3(N)) = \ber \pi_*\w^3(N)$.

Let $N_{\tred} = \sum q_k$ be the reduction. Then each $q_k$ is a divisor in $X_{\tred}$ that is a single point in each fiber of $\pi$. We choose an even global section of $\w^3(N)$, call it $\tau$, that vanishes to exactly first order on each $q_k$. For each $k$ we choose local coordinates $x_k \, | \, \theta_k$ such that $\tau$ near each $q_k$ is
$$
\tau \sim x_k(a_k + b_k\tth_k + O(x_k)) [dx_k \, | \, d\tth_k].
$$

$\tau$ then induces a short exact sequence
$$
0 \longrightarrow \w^3 \overset{\tau}{\longrightarrow}\w^3(N) \longrightarrow \w^3(N)|_N \longrightarrow 0,
$$
which in fact gives the short exact sequence on cohomology
$$
0 \longrightarrow \pi_*\w^3 \overset{\tau}{\longrightarrow} \pi_*\w^3(N) \longrightarrow \pi_*\w^3(N)|_N \longrightarrow 0.
$$

The rank of $\pi_*\w^3(N)$ is $3g-3 + n_{NS} \, | \, 2g-2 + n_{NS}$, thus we construct a local basis for $\pi_*\w^3(N)$ in the following way. First we consider the image of $\mathcal{B}_{\w^3}$ under $\tau$ and complete it to a basis. Namely we construct
$$
\mathcal{B}_{\w^3(N)} := \tau \mathcal{B}_{\w^3} \cup \mathcal{B}'
$$
where $\mathcal{B}'$ is
$$
\mathcal{B}' = \{ \alpha_1, \cdots, \alpha_{n_{NS}} \, | \, \beta_1, \cdots, \beta_{n_{NS}} \}.
$$
We expand each $\alpha_j$ and $\beta_j$ near $q_k$ as
$$
\alpha_j \sim ( \alpha_j^{k, +} + \alpha_j^{k, -}\tth_k + O(x_j) ) [dx_j \, | \, d\tth_j]^3,
$$
$$
\beta_j \sim ( \beta_j^{k, -} + \beta_j^{k, +}\tth_k + O(x_j) ) [dx_j \, | \, d\tth_j]^3,
$$
and let
\begin{align} \nonumber
\mathcal{B}_{\w^3(N)|_N} := & \\ \{ [dx_1|d\tth_1]^3, \cdots, [dx_{n_{NS}}|d\tth_{n_{NS}}]^3 \, | \, &\tth_1[dx_1|d\tth_1]^3, \cdots, \tth_{n_{NS}}[dx_{n_{NS}}|d\tth_{n_{NS}}]^3 \}. \nonumber
\end{align}
Putting
\begin{equation} \label{gens_3}
    \begin{split}
        \delta^N_{3/2} & := \ber \mathcal{B}_{\w^3(N)|_N}, \\ 
        d_{3/2}^N & := \ber \mathcal{B}_{\w^3(N)},
    \end{split}
\end{equation}
we easily see that in the canonical identification 
$$ 
B(\w^3(N)) \cong \lambda_{3/2} \, B(\w^3(N)|_N)
$$
we have
$$
d_{3/2}^N = \ber M' \, d_{3/2} \delta^N_{3/2},
$$
where

\begin{equation} \label{m'}
M' = 
\left (
\begin{array}{ccc | ccc}
\alpha_1^{1,+} & \dots & \alpha_1^{n_{NS},+} & \alpha_1^{1,-} & \dots & \alpha_1^{n_{NS},-} \\
\vdots & & \vdots & \vdots & & \vdots \\
\alpha_{n_{NS}}^{1,+} & \dots & \alpha_{n_{NS}}^{n_{NS},+} & \alpha_{n_{NS}}^{1,-} & \dots & \alpha_{n_{NS}}^{n_{NS},-} \\
\phantom{asdf} \\
\hline \\
\beta_{1}^{1,-} & \dots & \beta_{1}^{n_{NS},-} & \beta_{1}^{1,+} & \dots & \beta_{1}^{n_{NS},+} \\
\vdots & & \vdots & \vdots & & \vdots \\
\beta_{n_{NS}}^{1,-} & \dots & \beta_{n_{NS}}^{n_{NS},-} & \beta_{n_{NS}}^{1,+} & \dots & \beta_{n_{NS}}^{n_{NS},+}
\end{array}
\right ).
\end{equation}
Combining this discussion with that of the Section \ref{relate_bases_ns} we obtain the following corollary.
\begin{cor} \label{maincor} Suppose $\pi:X \to S$ is a family of super Riemann surfaces of genus $g \geq 2$ with $n_{NS}$ Neveu-Schwarz punctures such that:
\begin{enumerate}
    \item The sheaves $R^i\pi_*\w^j$ are locally free for $i=0,1$, $j=0,1,2, 3$.
    \item $\pi_*\w$ has rank $g|1$.
\end{enumerate}
Then via the sections defined in (\ref{gens_2}), (\ref{gens_3}) and matrices in (\ref{m1_ns}), (\ref{m2_ns}), (\ref{m3_ns}), (\ref{m'}), the expression
$$
\mu^N := d^N_{3/2} (\delta^N_{3/2})^{-1} d_{1/2}^{-5} \frac{\ber M_3 \, \ber M_2}{(\ber M_1)^2 \, \ber M'} 
$$
gives a trivializing section of the line bundle
$$
 \ber \mathfrak{M}_{g;n_{NS}} \otimes B(\w^3(N)|_N)^{-1} \otimes \lambda_{1/2}^{-5}
$$
on $\mathfrak{M}_{g; n_{NS}}$.
\end{cor}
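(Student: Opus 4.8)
The plan is to recognize $\mu^N$ as the super Mumford form $\mu$ of Theorem \ref{mainthm2} itself, merely rewritten through a chain of canonical isomorphisms that trades the factor $\lambda_{3/2}$ for $\ber \mathfrak{M}_{g;n_{NS}} \otimes B(\w^3(N)|_N)^{-1}$. All of the analytic content — the connecting homomorphisms and residue computations packaged into $M_1$, $M_2$, $M_3$ — is therefore already in hand from Section \ref{relate_bases_ns}, and the task reduces to identifying line bundles and tracking their distinguished generators.

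First I would assemble the two canonical isomorphisms linking $\lambda_{3/2}$ to the Berezinian of moduli space. Starting from $\mathcal{T}_{\mathfrak{M}_{g;n_{NS}}} \cong R^1\pi_*\mathcal{D}^2(-N)$ and $\mathcal{D}^2(-N) \cong \w^{-2}(-N)$, Serre duality gives $R^1\pi_*\w^{-2}(-N) \cong (\pi_*\w^3(N))^*$, whence $\ber \mathfrak{M}_{g;n_{NS}} \cong B(\w^3(N))$ once we use $R^1\pi_*\w^3(N)=0$. The short exact sequence $0 \to \w^3 \to \w^3(N) \to \w^3(N)|_N \to 0$ has exact pushforward (again using $R^1\pi_*\w^3 = 0$), so the canonical isomorphism $B(\mathcal{F}') \otimes B(\mathcal{F}'') \cong B(\mathcal{F})$ attached to a short exact sequence furnishes $B(\w^3(N)) \cong \lambda_{3/2} \otimes B(\w^3(N)|_N)$. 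Composing these yields the canonical isomorphism $\lambda_{3/2} \cong \ber \mathfrak{M}_{g;n_{NS}} \otimes B(\w^3(N)|_N)^{-1}$.

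The remaining step is to follow distinguished sections through this last isomorphism and substitute. Completing $\tau \mathcal{B}_{\w^3}$ to $\mathcal{B}_{\w^3(N)}$ by $\mathcal{B}'$ and restricting to $N$ gives, exactly by the template of Section \ref{relate_bases_ns}, the relation $d_{3/2}^N = \ber M' \, d_{3/2}\, \delta^N_{3/2}$ in $B(\w^3(N))$; equivalently $d_{3/2} = (\ber M')^{-1}\, d_{3/2}^N\,(\delta^N_{3/2})^{-1}$ under the identifications above. Feeding this into the formula $\mu = d_{3/2}d_{1/2}^{-5}\,\frac{\ber M_3\,\ber M_2}{(\ber M_1)^2}$ of Theorem \ref{mainthm2} produces precisely the claimed expression $\mu^N$, so that $\mu^N = \mu$. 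Since $\mu$ trivializes $\lambda_{3/2}\lambda_{1/2}^{-5}$ and the canonical isomorphisms carry this line bundle isomorphically onto $\ber \mathfrak{M}_{g;n_{NS}} \otimes B(\w^3(N)|_N)^{-1} \otimes \lambda_{1/2}^{-5}$, the section $\mu^N$ trivializes the latter. The one point requiring honest verification — and the step I expect to be the only real obstacle — is that the restriction-and-completion procedure reproduces the precise block matrix $M'$ of (\ref{m'}), with the entries $\alpha_j^{k,\pm}$, $\beta_j^{k,\pm}$ appearing in the stated even/odd pattern; this is routine linear algebra of the kind already carried out for $M_1$, $M_2$, $M_3$, and it invokes Lemma \ref{canisodiv} to identify $B(\w^3(N)|_N)$ uniformly, with no new cohomological input.
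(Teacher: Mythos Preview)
Your proposal is correct and follows essentially the same route as the paper: identify $\ber \mathfrak{M}_{g;n_{NS}} \cong B(\w^3(N))$ via Serre duality, use the short exact sequence $0 \to \w^3 \to \w^3(N) \to \w^3(N)|_N \to 0$ to obtain $d_{3/2}^N = \ber M' \, d_{3/2}\,\delta^N_{3/2}$, and substitute into the formula for $\mu$ from Theorem \ref{mainthm2}. The only superfluous remark is the appeal to Lemma \ref{canisodiv} at the end --- that lemma compares Berezinians of \emph{different} line bundles restricted to the same divisor, whereas here $\delta^N_{3/2}$ is simply the generator coming from a single chosen basis of $\pi_*(\w^3(N)|_N)$ and no such comparison is needed.
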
 

Thus the constructed object $\mu^N$ can be viewed as a section of the Berezinian of the moduli space $\mathfrak{M}_{g:n_{NS}}$ with values in a particular line bundle.

The utility of such a formula for $\mu^N$ is that it indeed can be used to construct a measure on $\mathfrak{M}_{g; n_{NS}}$. The process of constructing this measure depends on the particular type of superstring theory one is working in, heterotic or Type II for example. In \cite{witmeasure} such a process is described, however it assumes the object one starts with is a section of $\ber \mathfrak{M}_{g;n_{NS}} \otimes \lambda_{1/2}^{-5}$ rather than what is given in Corollary \ref{maincor}, in a section of $\ber \mathfrak{M}_{g;n_{NS}} \otimes B(\w^3(N)|_N)^{-1} \otimes \lambda_{1/2}^{-5}$. In calculating scattering amplitudes, one inserts so called vertex operators at each puncture. The collection of them can be thought of as sections of $B(\w^3(N)|_N)$. Hence, after multiplying with the form $\mu^N$ we indeed arrive at a section of $\ber \mathfrak{M}_{g;n_{NS}} \otimes \lambda_{1/2}^{-5}$. The details of this discussion can be found in \cite{wit4} and \cite{witmeasure}.

\appendix

\section{A Few Technical Results}~

Here we develop a few technical statements used in the main arguments of the paper. These were motivated by \cite{witmeasure}.

\bigskip

Suppose we have a family of SUSY curves with $n_R$ Ramond punctures $\pi: X \to S$. Denote by $\w = \ber X/S$, the relative Berezinian sheaf. Let $\mathcal{F}$ denote the Ramond divisor and decompose it $\mathcal{F} = \sum_{k}^{n_R} \mathcal{F}_k$ into its $n_R$ minimal components. Recall (say, in the complex topology) that near a Ramond divisor $\mathcal{F}_k$ there are coordinates $x \, | \, \tth$ such that the divisor $\mathcal{F}_k$ is given by $\{ x = 0 \}$ and the distribution $\mathcal{D}$ is generated by
$$
D^*_\tth := \frac{\partial}{\partial \tth} + x \tth \frac{\partial}{\partial x}.
$$
Here the distinguished subbundle of $\Omega^1_{X/S}$ is $\mathcal{D}^{-2}(-\mathcal{F})$ and admits a generator
$$
\varpi^*_\tth := dx - x\tth d\tth.
$$

Coordinates near a Ramond puncture for which $D^*_\tth$ (or $\varpi^*_\tth$) generate $\mathcal{D}$ (resp. $\mathcal{D}^{-2}(-\mathcal{F})$) are called \emph{superconformal}. A superconformal change of coordinates near a Ramond puncture is a change of coordinates $z \, | \, \zeta$ such that one still has $\mathcal{F}_k = \{z = 0 \}$ and that $D^*_{\zeta}$ is a $\stsh_X$-multiple of $D^*_{\tth}$. One can also phrase this condition equivalently as the form $\varpi^*_{\zeta}$ is a multiple of $\varpi_{\tth}^*$.

It turns out that the possible choices of superconformal coordinates near a Ramond puncture is restricted. In fact, this is heavily exploited by E. Witten in \cite{wit3} to define the notion of odd periods of closed holomorphic one-forms on such a family of Ramond punctured SUSY curves. Witten phrases this constraint on coordinates as ``The odd coordinate $\tth$ is defined up to sign and a shift by an odd constant."
\begin{lem} \label{superconformallemma}
Let $x \, | \, \tth$ denote superconformal coordinates near a Ramond puncture $\mathcal{F}_k$. Any superconformal change of coordinates $z \, | \, \zeta$ can be expressed as
$$
z = f(x) + \lambda(x)\tth
$$
$$
\zeta = \psi(x) + g(x)\tth
$$
for even $f,g$ and odd $\psi, \lambda$. We then have 
\begin{enumerate}
    \item \label{g_square_1} $g(0)^2 = 1$, and
    \item \label{lambda_prime_psi_0} $\lambda'(0)\psi(0) = 0$.
\end{enumerate}
\end{lem}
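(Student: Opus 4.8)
The plan is to reduce the superconformal constraint to a single functional equation in the component functions of $z$ and $\zeta$, and then to read off the two relations from its $\tth^0$- and $\tth^1$-parts at $x=0$. First I would record the shape of the change of coordinates: expanding an even function and an odd function on a $1|1$ patch in powers of $\tth$ forces $z = f(x)+\lambda(x)\tth$ and $\zeta = \psi(x)+g(x)\tth$ with $f,g$ even and $\psi,\lambda$ odd, so the displayed form is automatic from parity. The requirement that $z\,|\,\zeta$ be superconformal \emph{near the same puncture}, i.e. $\mathcal{F}_k=\{z=0\}=\{x=0\}$, forces $f(0)=0$ with $f'(0)$ a unit (otherwise $z$ would be a unit near $x=0$, or would fail to generate the ideal $(x)$).

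Next I would translate the superconformal condition into an equation. Since by the chain rule $D^*_\tth = (D^*_\tth z)\,\partial_z + (D^*_\tth \zeta)\,\partial_\zeta$, while $D^*_\zeta = \partial_\zeta + z\zeta\,\partial_z$, the two odd vector fields are $\stsh_X$-proportional precisely when
$$ D^*_\tth z = (D^*_\tth \zeta)\, z\zeta . $$
I would then compute $D^*_\tth z$ and $D^*_\tth \zeta$ directly from $D^*_\tth = \partial_\tth + x\tth\,\partial_x$, expand the product $z\zeta$, and split the identity into its $\tth^0$- and $\tth^1$-components. This produces two relations holding identically in $x$, of the schematic form
$$ \lambda = -fg\psi \qquad\text{and}\qquad xf' = fg^2 - g\lambda\psi + xf\psi'\psi ; $$
the precise signs are convention dependent, but the coefficient of $fg^2$ in the second relation and of $fg\psi$ in the first are what matter below.

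Finally I would extract the two assertions. From $\lambda=-fg\psi$ together with $f(0)=0$ we get $\lambda(0)=0$, and differentiating gives $\lambda'(0) = -f'(0)g(0)\psi(0)$, so $\lambda'(0)$ is a multiple of $\psi(0)$; hence $\lambda'(0)\psi(0)$ is a multiple of $\psi(0)^2=0$, which is statement (\ref{lambda_prime_psi_0}). For statement (\ref{g_square_1}) I would evaluate the second relation and its $x$-derivative at $x=0$: the value is vacuous since $f(0)=\lambda(0)=0$, while the derivative yields $f'(0)=f'(0)g(0)^2 - g(0)\lambda'(0)\psi(0)$; the last term carries the factor $\lambda'(0)\psi(0)=0$ just established, leaving $f'(0)=f'(0)g(0)^2$, and dividing by the unit $f'(0)$ gives $g(0)^2=1$.

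The main obstacle I anticipate is purely bookkeeping: keeping the Koszul signs correct when applying the odd derivation $\partial_\tth$ and when commuting the odd quantities $\lambda,\psi,\tth$ past one another, so that the two component relations emerge with the right signs. The one genuinely substantive point is the cancellation in the final step — that the $\lambda'(0)$-contribution to the derivative of the second relation is proportional to $\psi(0)^2$ and therefore drops out — since this is exactly what isolates the clean relation $g(0)^2=1$. This cancellation is the supergeometric content behind Witten's statement that $\tth$ is determined only up to a sign and an odd shift.
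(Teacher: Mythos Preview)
Your proposal is correct and follows essentially the same route as the paper: derive two functional identities from the superconformal constraint, use the first ($\lambda$ proportional to $\psi$) to obtain $\lambda'(0)\psi(0)=0$, and use the second to extract $g(0)^2=1$. The only cosmetic differences are that the paper phrases the superconformal condition via the generator $\varpi^*$ of $\mathcal{D}^{-2}(-\mathcal{F})$ rather than the vector field $D^*$, and isolates $g(0)^2$ by dividing the second identity through by $f(x)$ and letting $x\to 0$ rather than by differentiating and evaluating at $x=0$; both pairs of choices are equivalent.
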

\begin{proof}
After some tedious calculations one finds that the condition $\varpi^*_{\zeta}$ is proportional to $\varpi^*_{\tth}$ is
$$
-\left( \frac{\partial z}{\partial x} - z \zeta \frac{\partial \zeta}{\partial x} \right ) x \tth = \left ( \frac{\partial z}{\partial \tth} - z \zeta \frac{\partial \zeta}{\partial \tth} \right ).
$$
In terms of the functions $f,g, \lambda$ and $\psi$ this condition is the pair of conditions
$$
\lambda(x) - f(x)g(x)\psi(x) = 0,
$$
and
$$
f(x)g(x)^2 + \lambda(x)\psi(x)g(x) = xf'(x) - xf(x)\psi(x) \psi'(x).
$$
The first of these two conditions says that $\lambda$ and $\psi$ are proportional, hence their product vanishes. Using this and dividing by $f(x)$ in the second equation gives (note $f \neq 0$ away from $x = 0$)
\begin{equation} \label{app1}
    g(x)^2 = \frac{x}{f(x)} f'(x) - x \psi(x) \psi'(x).
\end{equation}
As the change of coordinates was superconformal, the divisor $\mathcal{F}_k$ was given as both the zero locus of $x$ and $z$, hence in particular $f(0) = 0$. This implies that the ratio $x/f(x) \to 1/f'(0)$ as $x \to 0$. Hence, taking $x \to 0$ in (\ref{app1}) yields $g(0)^2 = 1$, giving (\ref{g_square_1}). Assertion (\ref{lambda_prime_psi_0}) follows immediately from $\lambda(x) = f(x)g(x)\psi(x)$, recalling that $f(0)=0$ and $\psi(x)^2=0$. 
\end{proof}

Lemma \ref{superconformallemma}, will allow us to trivialize $\ber \pi_*(\oo/\oo(-2\mathcal{F}))$ canonically over the base $S$. Combined with Lemma \ref{canisodiv}, this will give a natural trivialization of $\ber \pi_*(\w^3(2\mathcal{F})/\w^3)$. This result proved significant in Section \ref{measure_mspr} as it allowed us to connect the super Mumford form constructed in Theorem \ref{mainthm1} with sections of $\ber \msp$.

\begin{lem} \label{ber_trivial_2f}
The Berezinian of the locally free $\stsh_S$-module $\pi_*(\oo/\oo(-2\mathcal{F}))$ is canonically trivial,
$$ \ber \pi_*(\oo/\oo(-2\mathcal{F})) \cong \stsh_S. $$
\end{lem}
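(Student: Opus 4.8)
The plan is to reduce the statement to a single Ramond puncture, write down an explicit local generator of the Berezinian, and then show that this generator is independent of the chosen superconformal coordinates by a direct Berezinian computation whose vanishings are controlled precisely by Lemma \ref{superconformallemma}.

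First I would exploit that the components $\mathcal{F}_k$ of the Ramond divisor are pairwise disjoint, so that
$$ \oo/\oo(-2\mathcal{F}) \cong \bigoplus_{k=1}^{\nr} \oo/\oo(-2\mathcal{F}_k). $$
Each summand is supported on a locus finite over $S$, so its higher direct image vanishes and $\pi_*$ is exact and commutes with the direct sum, giving $\pi_*(\oo/\oo(-2\mathcal{F})) \cong \bigoplus_k \pi_*(\oo/\oo(-2\mathcal{F}_k))$ as free $\stsh_S$-modules. Since the Berezinian of a direct sum of free modules is the tensor product of the factors' Berezinians, it suffices to trivialize each $\ber \pi_*(\oo/\oo(-2\mathcal{F}_k))$ canonically.

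Next, fixing $k$ and superconformal coordinates $x \, | \, \tth$ near $\mathcal{F}_k = \{x=0\}$, the ideal $\oo(-2\mathcal{F}_k)$ consists locally of superfunctions divisible by $x^2$, so $\pi_*(\oo/\oo(-2\mathcal{F}_k))$ is free over $\stsh_S$ of rank $2|2$ with homogeneous basis $(1, x \, | \, \tth, x\tth)$. I would take as candidate trivializing section the Berezinian generator $[1, x \, | \, \tth, x\tth]$, and define the global section asserted by the lemma to be the tensor product over $k$ of the classes $[1, x_k \, | \, \tth_k, x_k\tth_k]$. The main step, and the principal obstacle, is then to prove that this class does not depend on the superconformal chart. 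Given a second chart $z \, | \, \zeta$, Lemma \ref{superconformallemma} writes $z = f(x) + \lambda(x)\tth$ and $\zeta = \psi(x) + g(x)\tth$ with $f(0)=0$, subject to $g(0)^2 = 1$ and $\lambda'(0)\psi(0)=0$; moreover the proof of that lemma yields $\lambda(x) = f(x)g(x)\psi(x)$, so $\lambda(0)=0$ as well. Expanding $(1, z \, | \, \zeta, z\zeta)$ modulo $x^2$ (and $\tth^2=0$) in terms of $(1, x \, | \, \tth, x\tth)$, writing $f_1 = f'(0)$, the even--even block of the change-of-basis matrix has determinant $f_1$ and the odd--odd block has determinant $f_1\,g(0)^2 = f_1$, while the relations $\lambda(0)=0$ and $\lambda'(0)\psi(0)=0$ make the mixed correction $BD^{-1}C$ in the Berezinian vanish. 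Hence the Berezinian of the transition equals $f_1 \cdot f_1^{-1} = 1$, so $[1, z \, | \, \zeta, z\zeta] = [1, x \, | \, \tth, x\tth]$.

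Finally, since any two superconformal charts near $\mathcal{F}_k$ over an overlap in $S$ differ by exactly such a transformation, the local generators agree on overlaps and glue to a global trivialization of $\ber \pi_*(\oo/\oo(-2\mathcal{F}_k))$; tensoring over $k$ produces the canonical isomorphism $\ber \pi_*(\oo/\oo(-2\mathcal{F})) \cong \stsh_S$. I expect the only delicate point to be the Berezinian computation of the transition matrix: the two numerical constraints furnished by Lemma \ref{superconformallemma} are precisely what forces the answer to be $1$ rather than a nontrivial expression in $f'(0)$ and $g(0)$, and it is the restricted nature of superconformal coordinates near a Ramond puncture that makes the trivialization canonical.
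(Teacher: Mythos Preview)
Your proposal is correct and follows essentially the same approach as the paper: reduce to a single component $\mathcal{F}_k$, exhibit the generator $[1,x \,|\, \tth, x\tth]$, and verify that the change-of-basis matrix between two superconformal charts has Berezinian $1$ using exactly the constraints $g(0)^2=1$ and $\lambda'(0)\psi(0)=0$ from Lemma~\ref{superconformallemma}. Your explicit observation that $\lambda(0)=0$ (from $\lambda = fg\psi$ and $f(0)=0$) is precisely what the paper uses implicitly when writing down its transition matrix, so the two arguments are in fact identical.
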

\begin{proof}
We work locally on $S$. First, decompose $\mathcal{F} = \sum_k^{n_R} \mathcal{F}_k$ into its connected components. Then $\ber \pi_*(\oo/\oo(-2\mathcal{F})) = \otimes_k \ber \pi_*(\oo/\oo(-2\mathcal{F}_k))$ and so it suffices to show the result for each $\mathcal{F}_k$. To simplify notation, for the remainder of the proof write $\mathcal{F}$ for some $\mathcal{F}_k$.

Choose superconformal coordinates $x \, | \, \tth$ near $\mathcal{F} = \{ x = 0 \}$. With these coordinates one can trivialize $\ber \pi_*(\oo/\oo(-2\mathcal{F}))$ by the element
\begin{equation} \label{nat_elt_ber}
    \sigma_{x|\tth} = [1, x \, | \, \tth, x \tth],
\end{equation}
where $1, x, \tth, x\tth$ in (\ref{nat_elt_ber}) are to be understood as their images in $\oo/\oo(-2\mathcal{F})$. We claim that the element $\sigma_{x|\tth}$ is in fact canonical, in the sense that if $z \, | \, \zeta$ is another choice of superconformal coordinates, we then have $\sigma_{x|\tth} = \sigma_{z|\zeta}$. Indeed, for such a change of coordinates, write as in Lemma \ref{superconformallemma}
$$
z = f(x) + \lambda(x)\tth,
$$
$$
\zeta = \psi(x) + g(x)\tth.
$$
Looking at their images in the quotient $\oo/\oo(-2\mathcal{F})$, we see that modulo $\oo(-2\mathcal{F})$
\begin{equation*}
    \begin{split}
        z & = f'(0)x + \lambda'(0)x\tth, \\
        \zeta & = \psi(0) + \psi'(0)x + g(0)\tth + g'(0)x\tth, \\
        z \zeta & = f'(0)\psi(0)x + f'(0)g(0) x\tth.
    \end{split}
\end{equation*}

Hence, in $\pi_*(\oo/\oo(-2\mathcal{F}))$, the change of basis matrix $A$ from $\{1, x \, | \, \tth, x\tth \}$ to $\{ 1, z, \, | \, \zeta, z \zeta \}$ is given by
$$
A = 
\begin{pmatrix}
1 & 0 & \psi(0) & 0 \\
0 & f'(0) & \psi'(0) & f'(0)\psi(0) \\
0 & 0 & g(0) & 0 \\
0 & \lambda'(0) & g'(0) & f'(0)g(0) 
\end{pmatrix}.
$$
Recalling from Lemma \ref{superconformallemma} that $g(0)^2=1$ and $\lambda'(0)\psi(0) = 0$, a quick calculation will show that $\ber A = 1$. Thus the element $\sigma = \sigma_{x|\tth} = \sigma_{z|\zeta}$ is independent of the choice of superconformal coordinates. 

This local argument glues to a global canonical isomorphism 
$$
\ber \pi_*(\oo/\oo(-2\mathcal{F})) \cong \stsh_S.
$$
\end{proof}

Now, with the aid of Lemma \ref{canisodiv} we obtain

\begin{cor} \label{trivL} There is a canonical isomorphism
$$
\left( \ber \pi_*(\oo|_{\mathcal{F}}) \right)^{\otimes 2} \cong \oo_S.
$$
Hence, in particular for $\w$ the relative Berezinian sheaf, we get a natural identification
$$
\ber \pi_*(\w^3(2\mathcal{F})/\w^3) \cong \oo_S.
$$
\end{cor}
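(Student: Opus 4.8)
The plan is to deduce both statements purely formally from the two technical lemmas already in hand: the canonical trivialization $\ber \pi_*(\oo/\oo(-2\mathcal{F})) \cong \oo_S$ of Lemma \ref{ber_trivial_2f}, and the canonical isomorphism $B(\Ll|_D) \cong B(\Kk|_D)$ of Lemma \ref{canisodiv}. The organizing observation is that every sheaf occurring is a torsion sheaf supported on a thickening of the Ramond divisor $\mathcal{F}$, which is of relative dimension $0|1$ over $S$. Consequently $R^1\pi_*$ of each such sheaf vanishes, so that $B(-)$ collapses to $\ber \pi_*(-)$ and $\pi_*$ is exact on the short exact sequences below, letting me pass freely between restriction sheaves and their Berezinians of cohomology.

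For the first isomorphism I would filter the structure sheaf along the doubled divisor. From the inclusions $\oo(-2\mathcal{F}) \subset \oo(-\mathcal{F}) \subset \oo$ one extracts the short exact sequence on $X$
$$
0 \longrightarrow \oo(-\mathcal{F})|_{\mathcal{F}} \longrightarrow \oo/\oo(-2\mathcal{F}) \longrightarrow \oo|_{\mathcal{F}} \longrightarrow 0,
$$
identifying the sub as $\oo(-\mathcal{F})/\oo(-2\mathcal{F}) \cong \oo(-\mathcal{F})|_{\mathcal{F}}$ and the quotient as $\oo/\oo(-\mathcal{F}) \cong \oo|_{\mathcal{F}}$. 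Pushing forward and using the multiplicativity of $B$ over short exact sequences gives
$$
\ber \pi_*(\oo/\oo(-2\mathcal{F})) \cong B(\oo(-\mathcal{F})|_{\mathcal{F}}) \otimes B(\oo|_{\mathcal{F}}).
$$
The left-hand side is canonically $\oo_S$ by Lemma \ref{ber_trivial_2f}, while Lemma \ref{canisodiv} with $D = \mathcal{F}$ identifies $B(\oo(-\mathcal{F})|_{\mathcal{F}}) \cong B(\oo|_{\mathcal{F}})$. Substituting produces $(\ber \pi_*(\oo|_{\mathcal{F}}))^{\otimes 2} \cong \oo_S$, which is the first claim.

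For the second isomorphism I would run the identical filtration argument with $\w^3$ in place of $\oo$, using $\w^3 \subset \w^3(\mathcal{F}) \subset \w^3(2\mathcal{F})$ to obtain
$$
0 \longrightarrow \w^3(\mathcal{F})|_{\mathcal{F}} \longrightarrow \w^3(2\mathcal{F})/\w^3 \longrightarrow \w^3(2\mathcal{F})|_{\mathcal{F}} \longrightarrow 0.
$$
Taking Berezinians of cohomology and applying Lemma \ref{canisodiv} to each outer term separately identifies both $B(\w^3(\mathcal{F})|_{\mathcal{F}})$ and $B(\w^3(2\mathcal{F})|_{\mathcal{F}})$ with $B(\oo|_{\mathcal{F}}) = \ber \pi_*(\oo|_{\mathcal{F}})$, whence $\ber \pi_*(\w^3(2\mathcal{F})/\w^3) \cong (\ber \pi_*(\oo|_{\mathcal{F}}))^{\otimes 2}$, which is $\oo_S$ by the first part.

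The identifications of successive quotients with line-bundle restrictions to $\mathcal{F}$ and the vanishing of $R^1\pi_*$ are routine. The point demanding genuine care — and the main obstacle — is ensuring the two invocations of Lemma \ref{canisodiv} are carried out with the correct naturality, so that the composite is a \emph{canonical} trivialization independent of the auxiliary twist by $\mathcal{F}$, rather than a merely abstract isomorphism. This canonicity is precisely what is exploited in Section \ref{measure_mspr} when $\mu$ is reinterpreted as a section of $\ber \msp$, so it is essential that each step above be traced as an equality of distinguished generators rather than an isomorphism of line bundles.
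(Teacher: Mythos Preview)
Your argument is essentially identical to the paper's: filter by $\oo(-2\mathcal{F}) \subset \oo(-\mathcal{F}) \subset \oo$ (resp.\ $\w^3 \subset \w^3(\mathcal{F}) \subset \w^3(2\mathcal{F})$), apply Lemma~\ref{ber_trivial_2f} to trivialize the total Berezinian, and use Lemma~\ref{canisodiv} to collapse the two outer factors. The only discrepancy is a harmless sign: since $\w^3$ has rank $0|1$, Lemma~\ref{canisodiv} (stated for rank $1|0$) really gives $B(\w^3(k\mathcal{F})|_{\mathcal{F}}) \cong B(\oo|_{\mathcal{F}})^{-1}$, so the paper obtains $(\ber \pi_*(\oo|_{\mathcal{F}}))^{\otimes(-2)}$ rather than your $^{\otimes 2}$ --- but as the first part makes this square canonically trivial, the conclusion is unaffected.
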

\begin{proof}
By Lemma \ref{ber_trivial_2f}, $\ber \pi_*(\oo/\oo(-2\mathcal{F})) \cong \stsh_S$ is naturally trivial. On the other hand, by Lemma \ref{canisodiv}
\begin{equation*}
    \begin{split}
        \ber \pi_*(\oo/\oo(-2\mathcal{F})) & \cong \ber(\oo|_{\mathcal{F}}) \otimes \ber(\oo(-\mathcal{F})|_{\mathcal{F}}) \\ 
        & \cong \left( \ber \pi_*(\oo|_{\mathcal{F}}) \right)^{\otimes 2}.
    \end{split}
\end{equation*}
From here, it follows that
\begin{equation*}
    \begin{split}
        \ber \pi_*(\w^3(2\mathcal{F})/\w^3) & \cong \ber(\w^3(2\mathcal{F})|_{\mathcal{F}}) \otimes \ber(\w^3(\mathcal{F})|_{\mathcal{F}}) \\ 
        & \cong \left( \ber \pi_*(\oo|_{\mathcal{F}}) \right)^{\otimes (-2)} \\
        & \cong \oo_S.
    \end{split}
\end{equation*}

\end{proof}

\section{A Lemma for the Super Mumford Isomorphism}

\begin{lem} \label{canisodiv}
Let $\pi : D \to S$ be a smooth proper morphism of complex supervarieties (or complex supermanifolds) of relative dimension $0 | 1$. Let $\Ll$ and $\Kk$ be two invertible sheaves on $D$ of rank $1|0$. We then have a canonical isomorphism 
$$
\ber \pi_* \Ll \cong \ber \pi_*\Kk.
$$
\end{lem}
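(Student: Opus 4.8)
The plan is to reduce the statement to a purely local Berezinian computation, exploiting that $\pi$ has relative dimension $0|1$. A smooth proper morphism of relative dimension $0|1$ is finite and flat, so the reduction $D_{\tred} \to S_{\tred}$ is a homeomorphism of underlying spaces, $R^i\pi_* = 0$ for $i>0$, and both $\pi_*\Ll$ and $\pi_*\Kk$ are locally free of rank $1|1$ on $S$. Hence $\ber \pi_*\Ll$ and $\ber\pi_*\Kk$ are genuine invertible sheaves, and it suffices to exhibit a canonical isomorphism between them. Working locally, I would choose a relative odd coordinate $\tth$ so that $\stsh_D = \stsh_S \oplus \stsh_S \tth$, together with even local generators $e$ of $\Ll$ and $f$ of $\Kk$. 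Then $\{e \mid \tth e\}$ and $\{f \mid \tth f\}$ are local bases of $\pi_*\Ll$ and $\pi_*\Kk$, and I define the candidate isomorphism on these trivializations by $[e \mid \tth e] \mapsto [f \mid \tth f]$.

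The heart of the argument is to check that this local recipe is independent of all the choices made, so that the locally defined maps glue. First I would verify independence of the even generator: replacing $e$ by $e' = ue$ for an even invertible $u = a + \beta\tth$ (with $a$ an even invertible element of $\stsh_S$ and $\beta$ odd) gives $\tth e' = a(\tth e)$, since $\tth\beta\tth = 0$, so the change-of-basis matrix in the ordered basis $(e \mid \tth e)$ is
\[
\begin{pmatrix} a & 0 \\ \beta & a \end{pmatrix}, \qquad \ber = \det(a)\,\det(a)^{-1} = 1.
\]
Thus $[e \mid \tth e]$ does not depend on $e$, and likewise $[f \mid \tth f]$ does not depend on $f$.

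Next I would check compatibility with a change of relative coordinate. Any other odd coordinate has the form $\tth' = \gamma + c\tth$ with $\gamma$ odd and $c$ an even invertible element of $\stsh_S$; then $\tth' e = \gamma e + c(\tth e)$, so the change of basis from $(e \mid \tth e)$ to $(e \mid \tth' e)$ has matrix $\left(\begin{smallmatrix} 1 & \gamma \\ 0 & c\end{smallmatrix}\right)$ with Berezinian $c^{-1}$, i.e. $[e \mid \tth' e] = c^{-1}[e \mid \tth e]$. The crucial observation is that the \emph{same} factor $c^{-1}$ appears for $\Kk$, since the computation only involves $\tth$; hence under $\tth \mapsto \tth'$ both sides of the assignment scale by $c^{-1}$ and the map $[e \mid \tth e] \mapsto [f \mid \tth f]$ is unchanged. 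Having established invariance under all three types of change (generator of $\Ll$, generator of $\Kk$, and relative coordinate), the local isomorphisms agree on overlaps and patch to the desired global isomorphism $\ber \pi_*\Ll \cong \ber\pi_*\Kk$.

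The step I expect to be the main obstacle is the careful bookkeeping in the two Berezinian computations, where one must respect the sign conventions behind $\tth\beta\tth = 0$ and apply $\ber\left(\begin{smallmatrix} A & B \\ C & D\end{smallmatrix}\right) = \det(A - BD^{-1}C)\det(D)^{-1}$ correctly. The conceptual payoff, and the reason the lemma holds, is that a change of generator is Berezinian-neutral while a change of $\tth$ produces a factor depending only on the coordinate and not on the sheaf, so that this factor cancels when one compares $\Ll$ with $\Kk$.
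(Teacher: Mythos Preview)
Your proposal is correct and follows essentially the same route as the paper: the key computation in both is that rescaling a generator by an even invertible $u = a + \beta\tth$ acts on $\pi_*\Ll$ by a matrix $\left(\begin{smallmatrix} a & 0 \\ \beta & a \end{smallmatrix}\right)$ of Berezinian~$1$, so the locally defined map is independent of choices and glues. The paper packages this as choosing a local isomorphism $\varphi:\Ll\to\Kk$ and checking that $\ber\pi_*\varphi$ is independent of $\varphi$ (which absorbs your coordinate check, since $\ber\pi_*\varphi$ is intrinsically defined), whereas you verify generator- and coordinate-independence separately; one minor caveat is that $D_{\tred}\to S_{\tred}$ is only \'etale, not a homeomorphism, so your local model $\stsh_D=\stsh_S\oplus\stsh_S\tth$ should be understood \'etale-locally or componentwise.
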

\begin{proof}
For a sufficiently small open set $U$ in $S$, setting $V= \pi^{-1}(U)$ we can find an isomorphism $\varphi: \Ll|_V \to \Kk|_V$. This is possible as the map $\pi$ is of relative dimension $0|1$. The isomorphism $\varphi$ then induces an isomorphism $\ber \pi_*\varphi: \ber \pi_* \Ll|_V \to \ber \pi_* \Kk|_V$. If $\psi: \Ll|_V \to \Kk|_V$ was a possibly different isomorphism, it would differ from $\varphi$ by an $\stsh_D|_V$ automorphism. Every such automorphism is multiplication by an even invertible function $f \in \stsh_D|_V$. We denote this by $m_f$ so that $ \psi = m_f \circ \varphi$. Hence $\psi$ would then induce the isomorphism 
$$ \ber \pi_*m_f \circ \ber \pi_* \varphi : \ber \pi_* \Ll|_V \to \ber \pi_* \Kk|_V. $$

The key fact is that, by shrinking $U$ if necessary, we have that $\ber \pi_*m_f = 1$. Indeed for small affine $U = \text{Spec}(A)$, the space $D$ above is of the form $\text{Spec}(A[\alpha])$ for some odd parameter $\alpha$ (or more precisely a disjoint union of such spaces in the \'etale topology). Then the $\stsh_D|_U$ automorphism $m_f$ is equivalent to the $A[\alpha]$ automorphism (also denoted) $m_f$ which is multiplication by an even invertible element $f \in A[\alpha]$. The ring $A[\alpha]$ is a free $A$-module of rank $1|1$ and thus $\ber \pi_*m_f$ is the Berezinian of $m_f$ viewing $m_f$ as a map of $A$-modules. Writing $f = f_0 + f_1\alpha$ in components we easily see that with respect to the basis $\{1, \alpha \}$, the matrix of $m_f$ is
$$
m_f = 
\begin{pmatrix}
    f_0 & 0 \\
    f_1 & f_0
\end{pmatrix}.
$$
Thus, $\ber m_f$ = 1 as claimed (note that $f_0 \neq 0$ by assumption).

Thus we see that the invertible sheaves on $S$, $\ber \pi_*\Ll$ and $\ber \pi_* \Kk$ are canonically isomorphic for all sufficiently small open sets $U$ in $S$. Hence, we get a natural global isomorphism.
\end{proof}

\section*{Acknowledgements}

I'd like to thank the organizers of the 2015 Supermoduli workshop at the Simons Center for Geometry and Physics and those who put in the effort to put those excellent lectures online. I am grateful to the speakers of the workshop R. Donagi, P. Deligne, E. Witten, E. D'Hoker and D. H. Phong whose efforts illuminated many interesting concepts. I'd like to thank E. Witten for his insight and several valuable comments, and especially A. Voronov for his guidance and the frequent helpful discussions. The author would also like to thank the anonymous referee whose comments and suggestions greatly improved the quality of this work.

\bibliographystyle{plain}
\bibliography{bibfile}
\nocite{*}

\textsc{Daniel J. Diroff: Department of Mathematics, University of Minnesota, Minneapolis, MN 55455. Email: \emph{dirof003@umn.edu}}

\end{document}